


\ifx\JoCGVer\undefined
  \newcommand{\InJoCGVer}[1]{}
  \newcommand{\InNotJoCGVer}[1]{#1}
\else
  \newcommand{\InJoCGVer}[1]{#1}%
  \newcommand{\InNotJoCGVer}[1]{}%
\fi

   \documentclass[12pt]{article}%

   \usepackage[cm]{fullpage}%

\newcommand{\IfPrinterVer}[2]{#2}%
\providecommand{\Mh}[1]{{{#1}}}%

\IfFileExists{.latex_printer_friendly}{\def\GenPrinterVer{1}}{}%

\ifx\GenPrinterVer\undefined
   \IfFileExists{.latex_color}{\def\GenColorMath{1}}{} 
\else
   \renewcommand{\IfPrinterVer}[2]{#1}%
\fi

\ifx\GenColorMath\undefined
\else
   \renewcommand{\Mh}[1]{{\textcolor{red}{#1}}}%
\fi

\InJoCGVer{%
   \renewcommand{\Mh}[1]{{{#1}}}%
}

%

\usepackage{ifluatex}
\usepackage{ifxetex}

\ifluatex 
   \usepackage{fontspec}
   \usepackage[utf8]{luainputenc}
\else
   \ifxetex 
       \usepackage{fontspec}
   \else
       \usepackage[T1]{fontenc}
       \usepackage[utf8]{inputenc}
   \fi
\fi

\usepackage{caption}%
\usepackage{graphicx}%
\usepackage{amsmath}%
\usepackage{paralist}%
\usepackage{xcolor}%
\usepackage{xspace}%
\usepackage{euscript}%
\usepackage{picins}%
\usepackage{sariel_biblatex}%
\usepackage{stmaryrd}%
\usepackage{mleftright}%
\InNotJoCGVer{%
   \usepackage{footnote_fix}%
}

\usepackage{amsmath}%
\usepackage[amsmath,thmmarks]{ntheorem}
\theoremseparator{.}%
\usepackage{amssymb}

\usepackage{titlesec}
\titlelabel{\thetitle. }%

\setcounter{secnumdepth}{4}

\newtheorem{theorem}{Theorem}[section] 
\newtheorem{lemma}[theorem]{Lemma}


\usepackage{color}
\usepackage{hyperref}%
\hypersetup{%
      breaklinks=true,%
      colorlinks=true,%
      urlcolor =[rgb]{0.250,0.0,0.0},
      linkcolor=[rgb]{0,0.245,0.00},%
      citecolor=[rgb]{0,0,0.245}
   }%

\newcommand{\Do}{{\small\bf do}\ }

\newcommand{\Return}{{\small\bf return\ }}

\newcommand{\If}{{\small\bf if}\ }
\newcommand{\Then}{{\small\bf then}\ }
\newcommand{\Else}{{\small\bf else}\ }

\newcommand{\For}{{\small\bf for}\ }

\newcommand{\xbeginlgox}{\begin{minipage}{1in}\begin{tabbing}
           \quad\=\qquad\=\qquad\=\qquad\=\qquad\=\qquad\=\qquad\=\kill}
        \newcommand{\xendlgox}{\end{tabbing}\end{minipage}}

\newenvironment{algorithmEnv}{
   \begin{tabular}{|l|}\hline\xbeginlgox\bigskip}
    {\xendlgox\\\hline\end{tabular}}

\newenvironment{program}{
   \begin{minipage}{4.0in}
   \begin{tabbing}
       \ \ \ \ \= \ \ \ \= \ \ \ \ \= \ \ \ \ \= \ \ \ \ \=
      \ \ \ \ \= \ \ \ \ \= \ \ \ \ \= \ \ \ \ \=
      \ \ \ \ \= \ \ \ \ \= \ \ \ \ \= \ \ \ \ \= \kill
}{
   \end{tabbing}
   \end{minipage}
}

\DefineNamedColor{named}{RedViolet} {cmyk}{0.07,0.90,0,0.34}

\newcommand{\AlgorithmI}[1]{{\textcolor[named]{RedViolet}{\texttt{\bf{#1}}}}}
\newcommand{\Algorithm}[1]{{\AlgorithmI{#1}\index{algorithm!#1@{\AlgorithmI{#1}}}}}

\theoremstyle{remark}%
\theoremheaderfont{\sf}
\theorembodyfont{\upshape}%

\newtheorem{defn}[theorem]{Definition}

\newtheorem*{remark:unnumbered}[theorem]{Remark}%
   \newtheorem{remark}[theorem]{Remark}%
   \newtheorem{remark:ext}[theorem]{Remark}%
   \newtheorem{example}[theorem]{Example}

   
   \theoremheaderfont{\em}%
   \theorembodyfont{\upshape}%
   \theoremstyle{nonumberplain}
   \theoremseparator{}
   \theoremsymbol{\myqedsymbol}
   \newtheorem{proof}{Proof:}

   \newcommand{\myqedsymbol}{\rule{2mm}{2mm}}

\numberwithin{figure}{section}%
\newcommand{\SarielThanks}[1]{\thanks{Department of Computer
      Science; 
      University of Illinois; 
      201 N. Goodwin Avenue;
      Urbana, IL, 61801, USA;
      {\tt sariel\atgen{}illinois.edu}; {\tt
         \url{http://sarielhp.org/}.} #1}}

\newcommand{\atgen}{\symbol{'100}}



\newcounter{tempEnumCounter}
\newcommand{\SaveEnumCounter}%
        {\setcounter{tempEnumCounter}{\value{enumi}}}%
\newcommand{\RestoreEnumCounter}%
        {\setcounter{enumi}{\value{tempEnumCounter}}}

\IfPrinterVer{%
  \definecolor{blue25}{rgb}{0,0,0}
  \DefineNamedColor{named}{RedViolet} {rgb}{0,0,0}%
}{%
  \definecolor{blue25}{rgb}{0,0,0.7}
  \DefineNamedColor{named}{RedViolet} {cmyk}{0.07,0.90,0,0.34}
}%

\InJoCGVer{%
  \definecolor{blue25}{rgb}{0,0,0}
  \DefineNamedColor{named}{RedViolet} {rgb}{0,0,0}%
}

\newcommand{\emphic}[2]{%
     \textcolor{blue25}{%
         \textbf{\emph{#1}}}%
         \index{#2}}
\newcommand{\emphi}[1]{\emphic{#1}{#1}}


\providecommand{\Mh}[1]{{{#1}}}

\newcommand{\pnt}{\Mh{p}}%
\newcommand{\pntA}{\Mh{q}}%
\newcommand{\pntB}{\Mh{u}}%
\newcommand{\pntC}{\Mh{x}}%
\newcommand{\pntD}{\Mh{y}}%

\newcommand{\si}[1]{#1}

\newcommand{\ray}{\sigma}
\newcommand{\rayA}{\sigma'}
\newcommand{\rayB}{\pi}

\newcommand{\SPath}{\mu}

\newcommand{\Path}{\tau}
\newcommand{\PathA}{\tau'}

\newcommand{\vertex}{v}

\newcommand{\AlgIsIn}{\Algorithm{isPntIn}\xspace}
\newcommand{\AlgIsInSubPoly}{\Algorithm{isInSubPoly}\xspace}

\newcommand{\Polygon}{\Mh{P}}%
\newcommand{\PolygonA}{\Mh{Q}}%

\newcommand{\cLenValue}{\Mh{8\pth{5 + \ceil{\ln \delta}}^2}}%

\newcommand{\permut}[1]{\left\langle {#1} \right\rangle}
\newcommand{\ClipPoly}[3]{#1\! \permut{ #2, #3 }}
\newcommand{\source}{\Mh{s}}

\newcommand{\target}{\Mh{t}}%

\renewcommand{\Re}{\mathbb{R}}

\newcommand{\solveVS}{\AlgorithmI{solve{VS}}\xspace}%
\newcommand{\WSet}{W}%

\newcommand{\cDim}{\Mh{\delta}}%
\newcommand{\VSpace}{\mathcal{V}}
\newcommand{\CSet}{\Mh{H}}%
\newcommand{\CSetA}{\Mh{X}}%
\newcommand{\CSetB}{\Mh{Y}}%
\newcommand{\CSetC}{\Mh{Z}}%
\newcommand{\vFuncC}{\Mh{\mathrm{cl}}}%
\newcommand{\vFuncX}[1]{\vFuncC\pth{#1}}%
\newcommand{\clX}[1]{\vFuncX{#1}}

\newcommand{\constB}{\Mh{\zeta}}%

\newcommand{\pDim}{\Mh{\phi}}%
\newcommand{\pDimB}{\Mh{\psi}}%
\newcommand{\ts}{\hspace{0.6pt}}

\newcommand{\algVio}{\AlgorithmI{violate}\xspace}%
\newcommand{\compBasis}{\AlgorithmI{comp{}Basis}\xspace}%

\newcommand{\Space}{\Mh{m}}%

\providecommand{\ceil}[1]{\left\lceil {#1} \right\rceil}
\providecommand{\floor}[1]{\left\lfloor {#1} \right\rfloor}

\providecommand{\brc}[1]{\left\{ {#1} \right\}}
\newcommand{\pth}[1]{\mleft({#1}\mright)}

\providecommand{\cardin}[1]{\left\lvert {#1} \right\rvert}

\newcommand{\edge}{e}

\newcommand{\disk}{D}
\renewcommand{\th}{th\xspace}

\newcommand{\CurveSet}{\Gamma}
\newcommand{\CurveSetA}{\Psi}
\newcommand{\CurveSetB}{\Upsilon}

\newcommand{\Corridor}{\Mh{C}}

\newcommand{\CLList}{L}
\newcommand{\Event}{\mathcal{E}}
\newcommand{\EventA}{\mathcal{F}}
\newcommand{\ISet}{\mathcal{I}}

\newcommand{\sep}[1]{\left|\, {#1} \right.}

\newcommand{\AllCorridors}{\EuScript{F}}%

\newcommand{\CD}[1]{\EuScript{C}\pth{ #1}}%

\newcommand{\DefSetChar}{\Mh{D}}%
\newcommand{\DefSet}[1]{\DefSetChar\pth{#1}}
\newcommand{\KillSet}[1]{\Mh{K}\pth{#1}}
\newcommand{\Sample}{\mathcal{S}}%

\newcommand{\curve}{\sigma}
\newcommand{\curveA}{\pi}

\newcommand{\Term}[1]{\textsf{#1}}
\newcommand{\LP}{\Term{LP}\index{linear programming}\xspace}
\newcommand{\NPComplete}{\Term{NPComplete}\xspace}%

\newcommand{\Basis}{\Mh{B}}%
\newcommand{\BasisA}{\Mh{{B}'}}%

\newcommand{\sopt}{\mu_{\mathrm{opt}}}

\newcommand{\cell}{\Mh{\tau}}%
\newcommand{\cellX}[1]{\Mh{\tau}\pth{#1}}%
\newcommand{\cst}{\Mh{f}}%
\newcommand{\cstA}{\Mh{g}}%
\newcommand{\basisX}[1]{\Mh{\mathrm{basis}}\pth{#1}}%

\newcommand{\nA}{\Mh{\overline{{n}}}}%
\newcommand{\nB}{{\mathsf{t}}}

\newcommand{\remove}[1]{}

\newcommand{\RegionsSet}[1]{\Mh{\mathcal{F}}\pth{#1}}%

\newcommand{\Arr}{\Mh{\mathop{\mathrm{\EuScript{A}}}}}%
\newcommand{\ArrVD}[1]{\Mh{\Arr^{|}} \pth{#1}}

\newcommand{\Trap}{\Mh{\sigma}}

\newcommand{\IntRange}[1]{\left\llbracket #1 \right\rrbracket}

\newcommand{\ds}{\displaystyle}%
\newcommand{\XS}{\Mh{\mathcal{X}}}%

\newcommand{\probEL}[1]{\rho^{}_{\mathrm{0}}\pth{#1}}
\newcommand{\probEH}[2]{\rho^{}_{#1}\pth{#2}}
\newcommand{\BSetCh}{\mathcal{B}}%
\newcommand{\BasesAll}{\mathcal{B}}%
\newcommand{\BasisLE}[1]{{\BSetCh}_{\leq #1}}%
\newcommand{\RSample}{\Mh{R}}%
\newcommand{\etal}{\textit{et~al.}\xspace}

\newcommand{\Savron}{{\v S}avro{\v n}\xspace}
\newcommand{\PRG}{\Term{PRG}\xspace}

\newcommand{\Polytope}{\Mh{P}}%
\newcommand{\Vertices}{\Mh{V}}%
\newcommand{\Domain}{\Mh{\mathcal{D}}}

\newcommand{\Instance}{\Mh{I}}%
\newcommand{\vecI}{\Mh{v^{}_{\Instance}}}%
\newcommand{\TriX}[1]{\mathcal{T}\pth{#1}}

\providecommand{\Matousek}{Matou{\v s}ek\xspace}

\newcommand{\ListOfPrimes}{\begin{inparaenum}[(1)]
    \item $2$
    \item $5$

    \item $11$
    \item $17$
    \item $37$
    \item $67$
    \item $131$
    \item $257$
    \item $521$
    \item $1,031$
    \item $2,053$
    \item $4,099$
    \item $8,209$
    \item $16,411$
    \item $32,771$

    \item $65,537$
    \item $131,101$
    \item $262,147$
    \item $524,309$
    \item $1,048,583$
    \item $2,097,169$
    \item $4,194,319$
    \item $8,388,617$

    \item $16,777,259$
    \item $33,554,467$
    \item $67,108,879$
    \item $134,217,757$
    \item $268,435,459$
    \item $536,870,923$
    \item $1,073,741,827$
    \item $2,147,483,659$
    \item $4,294,967,311$
    \item $8,589,934,609$
    \item $17,179,869,209$
    \item $34,359,738,421$
    \item $68,719,476,767$
    \item $137,438,953,481$
    \item $274,877,906,951$
    \item $549,755,813,911$
    \item $1,099,511,627,791$
    \item $2,199,023,255,579$
    \item $4,398,046,511,119$
    \item $8,796,093,022,237$
    \item $17,592,186,044,423$
    \item $35,184,372,088,891$
    \item $70,368,744,177,679$
    \item $140,737,488,355,333$
    \item $281,474,976,710,677$
    \item $562,949,953,421,381$
    \item $1,125,899,906,842,679$.
\end{inparaitem}
}

\newcommand{\pbrcx}[1]{\mleft[ {#1} \mright]}
\newcommand{\Ex}[1]{\mathop{\mathbf{E}}\pbrcx{#1}}
\newcommand{\Prob}[1]{\mathop{\mathbf{Pr}}\pbrcx{#1}}

\newcommand{\HLinkShort}[2]{\hyperref[#2]{#1\ref*{#2}}}
\newcommand{\HLink}[2]{\hyperref[#2]{#1~\ref*{#2}}}
\newcommand{\HLinkPage}[2]{\hyperref[#2]{#1~\ref*{#2}%
      $_\text{p\pageref{#2}}$}}
\newcommand{\HLinkPageOnly}[1]{\hyperref[#1]{Page~\refpage*{#1}%
      $_\text{p\pageref{#1}}$}}

\newcommand{\HLinkSuffix}[3]{\hyperref[#2]{#1\ref*{#2}{#3}}}
\newcommand{\HLinkPageSuffix}[3]{\hyperref[#2]{#1\ref*{#2}%
      #3$_\text{p\pageref{#2}}$}}

\newcommand{\figlab}[1]{\label{fig:#1}}
\newcommand{\figref}[1]{\HLink{Figure}{fig:#1}}

\newcommand{\figrefpage}[1]{\HLinkPage{Figure}{fig:#1}}

\newcommand{\remlab}[1]{\label{rem:#1}}

\newcommand{\lemlab}[1]{\label{lemma:#1}}
\newcommand{\lemref}[1]{\HLink{Lemma}{lemma:#1}}%

\newcommand{\apndlab}[1]{\label{apnd:#1}}
\newcommand{\apndref}[1]{\HLink{Appendix}{apnd:#1}}

\newcommand{\thmlab}[1]{{\label{theo:#1}}}
\newcommand{\thmref}[1]{\HLink{Theorem}{theo:#1}}

\providecommand{\deflab}[1]{\label{def:#1}}
\newcommand{\defref}[1]{\HLink{Definition}{def:#1}}

\newcommand{\seclab}[1]{\label{sec:#1}}
\newcommand{\secref}[1]{\HLink{Section}{sec:#1}}

\newcommand{\exmlab}[1]{\label{example:#1}}
\newcommand{\exmref}[1]{\HLink{Example}{example:#1}}

\newcommand{\itemlab}[1]{\label{item:#1}}
\newcommand{\itemref}[1]{\HLinkSuffix{(}{item:#1}{)}}

\newcommand{\constPRG}{c'}

\newlength{\savedparindent}
\newcommand{\SaveIndent}{\setlength{\savedparindent}{\parindent}}
\newcommand{\RestoreIndent}{\setlength{\parindent}{\savedparindent}}

\IfFileExists{sariel_computer.sty}{\def\sarielComp{1}}{}
\ifx\sarielComp\undefined%
\newcommand{\SarielComp}[1]{}
\newcommand{\NotSarielComp}[1]{#1}%
 \else
\newcommand{\SarielComp}[1]{#1}%
\newcommand{\NotSarielComp}[1]{}%
\fi



\BibLatexMode{\addbibresource{sublinear_space}}

\begin{document}

\InNotJoCGVer{%
   \title{Shortest Path in a Polygon using Sublinear Space%
      \footnote{%
         Work on this paper was partially supported by NSF AF awards
         CCF-1421231 and 
         CCF-1217462.  
         A preliminary version of this paper appeared in SoCG 2015
         \cite{h-sppus-15}.%
      }%
   }%
}%
\InJoCGVer{%
   \title{\MakeUppercase{Shortest Path in a Polygon using Sublinear
         Space}%
      \thanks{%
         Work on this paper was partially supported by a NSF AF awards
         CCF-1421231, and 
         CCF-1217462.  
         A preliminary version of this paper appeared in SoCG 2015
         \cite{h-sppus-15}.%
      }%
   }%
}

\author{Sariel Har-Peled\SarielThanks{}}

\date{\today}

\maketitle

\InNotJoCGVer{%
   \setfnsymbol{stars}
}

\begin{abstract}
    We resolve an open problem due to Tetsuo Asano, showing how to
    compute the shortest path in a polygon, given in a read only
    memory, using sublinear space and subquadratic time. Specifically,
    given a simple polygon $\Polygon$ with $n$ vertices in a read only
    memory, and additional working memory of size $\Space$, the new
    algorithm computes the shortest path (in $\Polygon$) in
    $O( n^2 /\, \Space )$ expected time, assuming
    $\Space = O(n /\log^2 n)$. This requires several new tools, which
    we believe to be of independent interest.

    Specifically, we show that violator space problems, an abstraction
    of low dimensional linear-programming (and \LP-type problems), can
    be solved using constant space and expected linear time, by
    modifying Seidel's linear programming algorithm and using
    pseudo-random sequences.
\end{abstract}

\section{Introduction}

Space might not be the final frontier in the design of algorithms but
it is an important constraint. Of special interest are algorithms that
use sublinear space. Such algorithms arise naturally in streaming
settings, or when the data set is massive, and only a few passes on
the data are desirable. Another such setting is when one has a
relatively weak embedded processor with limited high quality
memory. For example, in 2014, flash memory could withstand around
100,000 rewrites before starting to deteriorate. Specifically, imagine
a hybrid system having a relatively large flash memory, with
significantly smaller RAM. That is to a limited extent the setting in
a typical smart-phone\footnote{For example, a typical smart-phone in
   2014 has 2GB of RAM and 16GB of flash memory. I am sure these
   numbers would be laughable in a few years. So it goes.}.

\paragraph*{The model.} %
The input is provided in a read only memory, and it is of size $n$. We
have $O(\Space)$ available space which is a read/write space (i.e.,
the \emph{work space}). We assume, as usual, that every memory cell is
a word, and such a word is large enough to store a number or a
pointer.  We also assume that the input is given in a reasonable
representation\footnote{In some rare cases, the ``right'' input
   representation can lead directly to sublinear time algorithms. See
   the work by Chazelle \etal \cite{clm-sga-05}.}.  

Since a memory cell has $\Omega(\log n)$ bits, for $m=O(1)$, this is
roughly the $\log$-space model in complexity.  An example of such
algorithms are the standard \NPComplete reductions, which can all be
done in this model. Algorithms developed in this model of limited work
memory include:
\begin{inparaenum}[(i)]
    \item median selection \cite{mr-sroms-96},
    \item deleting a connected component in a binary image
    \cite{a-ipaec-12},
    \item triangulating a set of points in the plane, computing their
    Voronoi diagram, and Euclidean minimum spanning tree
    \cite{amrw-cwsag-11}, to name a few.
\end{inparaenum}
For more details, see the introduction of Asano \etal
\cite{abbkm-mcasp-13, abbkm-mcasp-14}.

\paragraph*{The problem.} %
We are given a simple polygon $\Polygon$ with $n$ vertices in the
plane, and two points $\source, \target \in \Polygon$ -- all provided
in a read-only memory. We also have $O(\Space)$ additional read-write
memory (i.e., work space). The task is to compute the shortest path
from $\source$ to $\target$ inside $\Polygon$.

Asano \etal \cite{abbkm-mcasp-13} showed how to solve this problem, in
$O(n^2/\Space)$ time, using $O(\Space)$ space. The catch is that their
solution requires quadratic time preprocessing. In a talk by Tetsuo
Asano, given in a workshop in honor of his 65\th birthday (during SoCG
2014), he posed the open problem of whether this quadratic
preprocessing penalty can be avoided. This work provides a positive
answer to this question.

\bigskip%
\noindent%
\begin{minipage}{0.7\linewidth}
    \textbf{If linear space is available.} %
    The standard algorithm \cite{lp-esppr-84} for computing the
    shortest path in a polygon, triangulates the polygon,
    (conceptually) computes the dual graph of the triangulation, which
    yields a tree, with a unique path between the triangles that
    contains the source $\source$, and the target $\target$. This path
    specifies the sequence of diagonals crossed by the shortest path,
    and it is now relatively easy to walk through this sequence of
    triangles and maintain the shortest paths from the source to the
    two endpoints of each diagonal. These paths share a prefix path,
    and then diverge into two concave chains (known together as a
    \emph{funnel}). Once arriving to the destination, one computes the
    unique tangent from the destination $\target$ to one of these
    chains, and the (unique) path, formed by the prefix together with
    the tangent, defines the shortest path, which can be now extracted
    in linear time. See figure on the right.
\end{minipage}%
\begin{minipage}{0.29\linewidth}
    \hfill \includegraphics[width=0.93\linewidth,height=4.5cm]%
    {figs/shortest_path}
\end{minipage}

\paragraph*{Sketch of the new algorithm.} %
The basic idea is to decompose the polygon into bigger pieces than
triangles. Specifically, we break the polygon into canonical pieces
each of size $O(\Space)$. To this end, we break the given polygon
$\Polygon$ into $\ceil{n / \Space}$ polygonal chains, each with at
most $ \Space$ edges. We refer to such a chain as a \emph{curve}. We
next use the notion of \emph{corridor decomposition},
introduced\footnote{%
   Many somewhat similar decomposition ideas can be found in the
   literature. For example, the decomposition of a polygon into
   monotone polygons so that the pieces, and thus the original
   polygon, can be triangulated \cite{bcko-cgaa-08}.  Closer to our
   settings, Kapoor and Maheshwari \cite{km-eaesp-88} use the dual
   graph of a triangulation to create a decomposition of the free
   space among polygonal obstacles into corridors.  Nevertheless, the
   decomposition scheme of \cite{h-qssp-14} is right for our nefarious
   purposes, but the author would not be surprised if it was known
   before. Well, at least this footnote is new!} %
by the author \cite{h-qssp-14}, to (conceptually) decompose the
polygon into canonical pieces (i.e., corridors). Oversimplifying
somewhat, each edge of the medial axis \cite{b-tends-67} corresponds
to a corridor, which is a polygon having portions of two of the input
curves as floor and ceiling, and additional two diagonals of
$\Polygon$ as gates.  It is relatively easy, using constant space and
linear time, to figure out for such a diagonal if it separates the
source from the destination. Now, start from the corridor containing
the source, and figure out which of its two gates the shortest path
goes through. We follow this door to the next corridor and continue in
this fashion till we reach the destination. Assuming that computing
the next corridor can be done in roughly linear time, this algorithm
solves the shortest path problem in $O(n^2/\Space)$ time, as walking
through a corridor takes (roughly) linear time, and there are
$O(n /\Space)$ pieces the shortest path might go through. (One also
needs to keep track of the funnel being constructed during this walk,
and prune parts of it away because of space considerations.)

\paragraph*{point location queries in a canonical decomposition.} %
To implement the above, we need a way to perform a point location
query in the corridor decomposition, without computing it explicitly.
Specifically, we have an implicit decomposition of the polygon into
subpolygons, and we walk through this decomposition by performing a
sequence of point location queries on the shared borders between these
pieces.

More generally, we are interested in any canonical decomposition that
partition the underlying space into cells. Such a partition is induced
by a set of objects, and every cell is defined by a constant number of
objects. Standard examples of such partitions are (i) vertical
decomposition of segments in the plane, or (ii) bottom vertex
triangulation of the Voronoi diagram of points in $\Re^3$. Roughly
speaking, any partition that complies with the Clarkson-Shor framework
\cite{c-arscg-88,cs-arscg-89} is such a canonical decomposition, see
\secref{p:l:violator:spaces} for details.

If space and time were not a constraint, we could build the
decomposition explicitly, Then a standard point location query in the
history DAG would yield the desired cell. Alternatively, one can
perform this point location query in the history DAG implicitly,
without building the DAG beforehand, but it is not obvious how to do
so with limited space. Surprisingly, at least for the author, this
task can be solved using techniques related to low-dimensional linear
programming.

\paragraph*{Violator spaces.} %
Low dimensional linear programming can be solved in linear time
\cite{m-lpltw-84}. Sharir and Welzl \cite{sw-cblpr-92} introduced
\emph{\LP-type} problems, which are an extension of linear
programming. Intuitively, but somewhat incorrectly, one can think
about \LP-Type algorithms as solving low-dimensional convex
programming, although Sharir and Welzl \cite{sw-cblpr-92} used it to
decide in linear time if a set of axis-parallel rectangles can be
stabbed by three points (this is quite surprising as this problem has
no convex programming flavor). \LP-type problems have the same notions
as linear programming of bases, and an objective function. The
function scores such bases, and the purpose is to find the basis that
does not violate any constraint and minimizes (or maximizes) this
objective. A natural question is how to solve such problems if there
is no scoring function of the bases.

This is captured by the notion of \emph{violator spaces}
\cite{r-tpmlc-07, s-amop-07, gmrs-vssa-06, gmrs-vssa-08,
   bg-cavs-11}. The basic idea is that every subset of constraints is
mapped to a unique basis, every basis has size at most $\cDim$
($\cDim$ is the dimension of the problem, and is conceptually a
constant), and certain conditions on consistency and monotonicity
hold. Computing the basis of a violator space is not as easy as
solving \LP-type problems, because without a clear notion of progress,
one can cycle through bases (which is not possible for \LP-type
problems). See \Savron{} \cite{s-amop-07} for an example of such
cycling. Nevertheless, Clarkson's algorithm \cite{c-lvali-95} works
for violator spaces \cite{bg-cavs-11}.

We revisit the violator space framework, and show the following:
\begin{compactenum}[(A)]%
    \smallskip%
    \item Because of the cycling mentioned above, the standard version
    of Seidel's linear programming algorithm \cite{s-sdlpc-91} does
    not work for violator spaces. However, it turns out that a variant
    of Seidel's algorithm does work for violator spaces.

    \smallskip%
    \item We demonstrate that violator spaces can be used to solve the
    problem of point location in canonical decomposition. While in
    some cases this point location problem can be stated as an
    \LP-type problem, stating it as a violator space problem seems to
    be more natural and elegant.
    
    \smallskip%
    \item The advantages of Seidel's algorithm is that except for
    constant work space, the only additional space it needs is to
    store the random permutations it uses. We show that one can use
    pseudo-random generators (\PRG{}s) to generate the random
    permutation, so that there is no need to store it explicitly. This
    is of course well known -- but the previous analysis
    \cite{m-cgitr-94} for linear programming implied only that the
    expected running time is $O(n \log^{\cDim-1} n)$, where $\cDim$ is
    the combinatorial dimension. Building on Mulmuley's work
    \cite{m-cgitr-94}, we do a somewhat more careful analysis, showing
    that in this case one can use backward analysis on the random
    ordering of constraints generated, and as such the expected
    running time remains linear. 
    
    \smallskip%
    This implies that one can solve violator space problems (and thus,
    \LP and \LP-type problems) in constant dimension, using constant
    space, in expected linear time. 

    \smallskip%
    This compares favorably with previous work:
    \begin{inparaenum}[(A)]
        \item The algorithm for \LP by Mulmuley
        \cite{m-cgitr-94}, mentioned above, that uses constant space,
        and takes $O(n \log^{\cDim-1} n )$ expected time.
        \item Chan and Chen's \cite[Theorem 3.12]{cc-mpga-07} algorithm
        for \LP and \LP-type problems that uses $O(\log n)$ space and
        takes expected linear time.
    \end{inparaenum}
\end{compactenum}

\paragraph*{Paper organization.} %
We present the new algorithm for computing the basis of violator
spaces in \secref{violator:spaces}.  The adaptation of the algorithm
to work with constant space is described in
\secref{v:s:constant:space}.  We describe corridor decomposition and
its adaptation to our setting in \secref{corridor:decomp}.  We present
the shortest path algorithm in \secref{shortest:path}.


\section{Violator spaces and constant space %
   algorithms}
\seclab{violator:spaces}

First, we review the formal definition of \emph{violator spaces}
\cite{r-tpmlc-07, s-amop-07, gmrs-vssa-06, gmrs-vssa-08,
   bg-cavs-11}. We then show that a variant of Seidel's algorithm for
linear programming works for this abstract setting, and show how to
adapt it to work with constant space and in expected linear time.

\subsection{Formal definition of violator space}

Before delving into the abstract framework, let us consider the
following concrete example -- hopefully it would help the reader in
keeping track of the abstraction.

\begin{example}
    \exmlab{p:l:v:d}%
    We have a set $\CSet$ of $n$ segments in the plane, and we would
    like to compute the vertical trapezoid of $\ArrVD{\CSet}$ that
    contains, say, the origin, where $\ArrVD{\CSet}$ denote the
    vertical decomposition of the arrangement formed by the segments
    of $\CSet$.  For a subset $\CSetA \subseteq \CSet$, let
    $\cellX{\CSetA}$ be the vertical trapezoid in $\ArrVD{\CSetA}$
    that contains the origin. The vertical trapezoid $\cellX{\CSetA}$
    is defined by at most four segments, which are the \emph{basis} of
    $\CSetA$. A segment $\cst \in \CSet$ \emph{violates}
    $\cell = \cellX{\CSetA}$, if it intersects the interior of
    $\cellX{\CSetA}$. The set of segments of $\CSet$ that intersect
    the interior of $\cell$, denoted by $\vFuncX{\cell}$ or
    $\vFuncX{\CSetA}$, is the \emph{conflict list} of $\cell$.
\end{example}

Somewhat informally, a violator space identifies a vertical trapezoid
$\cell = \cellX{\CSetA}$, by its conflict list $\vFuncX{\CSetA}$, and
not by its geometric realization (i.e., $\cell$).

\begin{defn}
    \deflab{violator:space}%
    A \emphi{violator space} is a pair $\VSpace = (\CSet, \vFuncC)$,
    where $\CSet$ is a finite set of \emphi{constraints}, and
    $\vFuncC:2^\CSet \rightarrow 2^\CSet$ is a function, such that:
    \begin{compactitem}
        \item \textbf{Consistency}: For all $\CSetA \subseteq \CSet$,
        we have that $\vFuncX{\CSetA} \cap \CSetA = \emptyset$.

        \item \textbf{Locality}: For all
        $\CSetA \subseteq \CSetB \subseteq \CSet$, if
        $\vFuncX{\CSetA} \cap \CSetB = \emptyset$ then
        $\vFuncX{\CSetA} = \vFuncX{\CSetB}$.

        \item \textbf{Monotonicity}: For all
        $\CSetA \subseteq \CSetB \subseteq \CSetC \subseteq \CSet$, if
        $\vFuncX{\CSetA} = \vFuncX{\CSetC}$ then
        $\vFuncX{\CSetA} = \vFuncX{\CSetB} = \vFuncX{\CSetC}$.
    \end{compactitem}

    A set $\Basis \subseteq \CSetA \subseteq \CSet$ is a \emphi{basis}
    of $\CSetA$, if $\vFuncX{\Basis} = \vFuncX{\CSetA}$, and for any
    proper subset $\BasisA \subset \Basis$, we have that
    $\vFuncX{\BasisA} \neq \vFuncX{\Basis}$. The \emphi{combinatorial
       dimension}, denoted by $\cDim$, is the maximum size of a basis.
\end{defn}

Note that consistency and locality implies monotonicity. For the sake
of concreteness, it is also convenient to assume the following (this
is strictly speaking not necessary for the algorithm).
\begin{defn}
    For any $\CSetA \subseteq \CSet$ there is a unique \emphi{cell}
    $\cellX{\CSetA}$ associated with it, where for any
    $\CSetA, \CSetB \subseteq \CSet$, we have that if
    $\vFuncX{\CSetA} \neq \vFuncX{\CSetB}$ then
    $\cellX{\CSetA} \neq \cellX{\CSetB}$.  Consider any
    $\CSetA \subseteq \CSet$, and any $\cst \in \CSet$.  For
    $\cell = \cellX{\CSetA}$, the constraint $\cst$ \emphi{violates}
    $\cell$ if $\cst \in \vFuncX{\CSetA}$ (or alternatively, $\cst$
    \emph{violates} $\CSetA$).
\end{defn}

Finally, we assume that the following two basic operations are
available:
\begin{compactitem}
    \item \algVio{}($\cst, \Basis$): Given a basis $\Basis$ (or its
    cell $\cell = \cellX{\Basis}$) and a constraint $\cst$, it returns
    true $\iff$ $\cst$ violates $\cell$.

    \item \compBasis{}($\CSetA$): Given a set $\CSetA$ with at most
    $(\cDim+1)^2$ constraints, this procedure computes
    $\basisX{\CSetA}$, where $\cDim$ is the combinatorial dimension of
    the violator space\footnote{We consider $\basisX{\CSetA}$ to be
       unique (that is, we assume implicitly that the input is in
       general position). This can be enforced by using
       lexicographical ordering, if necessary, among the defining
       bases always using the lexicographically minimum basis.}. For
    $\cDim$ a constant, we assume that this takes constant time.
\end{compactitem}

\subsubsection{Linear programming via violator spaces}

Consider an instance $\Instance$ of linear programming in $\Re^d$ --
here the \LP is defined by a collection of linear inequalities with
$d$ variables.  The instance $\Instance$ induces a polytope
$\Polytope$ in $\Re^d$, which is the \emph{feasible domain} --
specifically, every inequality induces a halfspace, and their
intersection is the polytope.

The following interpretation of the feasible polytope is somewhat
convoluted, but serves as a preparation for the next example.  The
vertices $\Vertices$ of the polytope $\Polytope$ induce a
triangulation (assuming general position) of the sphere of directions,
where a direction ${v}$ belongs to a vertex $\pnt$, if and only if
$\pnt$ is an extreme vertex of $\Polytope$ in the direction of
$v$. Now, the objective function of $\Instance$ specifies a direction
$\vecI$, and in solving the \LP, we are looking for the extreme vertex
of $\Polytope$ in this direction.

Put differently, every subset $\CSet$ of the constraints of
$\Instance$, defines a triangulation $\TriX{\CSet}$ of the sphere of
directions. So, let the cell of $\CSet$, denoted by
$\cell = \cellX{\CSet}$, be the spherical triangle in this
decomposition that contains $\vecI$.  The basis of $\CSet$ is the
subset of constraints that define $\cellX{\CSet}$.  A constraint
$\cst$ of the \LP violates $\cell$ if the vertex induced by the basis
$\basisX{\CSet}$ (in the original space), is on the wrong side of
$\cst$.

Thus solving the \LP instance $\Instance = (\CSet,\vecI)$ is no more
than performing a point location query in the spherical triangulation
$\TriX{\CSet}$, for the spherical triangle that contains $\vecI$.

\begin{remark:unnumbered}
    An alternative, and somewhat more standard, way to see this
    connection is via geometric duality \cite[Chapter 25]{h-gaa-11}
    (which is \emph{not} \LP duality). The duality maps the upper
    envelope of hyperplanes to a convex-hull in the dual space (i.e.,
    we assume here that in the given \LP all the halfspaces (i.e.,
    each constraint corresponds to a halfspace) contain the positive
    ray of the $x_d$-axis -- otherwise, we need to break the
    constraints of the \LP into two separate families, and apply this
    reduction separately to each family). Then, extremal query on the
    feasible region of the \LP, in the dual, becomes a vertical ray
    shooting query on the upper portion of the convex-hull of the dual
    points -- that is, a point location query in the projection of the
    triangulation of the upper portion of the convex-hull of the dual
    points.
\end{remark:unnumbered}

\subsubsection{Point location via violator spaces}%
\seclab{p:l:violator:spaces}

\exmref{p:l:v:d} hints to a more general setup. So consider a space
decomposition into canonical cells induced by a set of objects. For
example, segments in the plane, with the canonical cells being the
vertical trapezoids. More generally, consider any decomposition of a
domain into simple canonical cells induced by objects, which falls
under the Clarkson-Shor framework \cite{c-arscg-88,cs-arscg-89}
(defined formally below). Examples of this include point location in a
\begin{inparaenum}[(i)]
    \item Delaunay triangulation,
    \item bottom vertex triangulation in an arrangement of
    hyperplanes,
    \item and vertical decomposition of arrangements of curves and
    surfaces in two or three dimensions, to name a few.
\end{inparaenum}


\begin{defn}[Clarkson-Shor framework \cite{c-arscg-88, cs-arscg-89}]%
    \deflab{c:s:framework}%
    Let $\Domain$ be an underlying domain, and let $\CSet$ be a set of
    objects, such that any subset $\RSample \subseteq \CSet$
    decomposes $\Domain$ into canonical cells
    $\RegionsSet{\RSample}$. This decomposition complies with the
    Clarkson-Shor framework, if we have that for every cell $\cell$
    that arises from such a decomposition has a \emph {defining set}
    $\DefSet{ \cell } \subseteq \CSet$. The size of such a defining
    set is assumed to be bounded by a constant $\cDim$.
    The \emph{stopping set} (or conflict list) $\KillSet{ \cell }$ of
    $\cell$ is the set of objects of $\CSet$ such that including any
    object of $\KillSet{ \cell }$ in $\RSample$ prevents $\cell$ from
    appearing in $\RegionsSet{\RSample }$.  We require that for any
    $\RSample \subseteq \CSet$, the following conditions hold:
    \begin{compactenum}[\qquad(i)]
    \item \itemlab{axiom:1} For any $\cell \in \RegionsSet{\RSample}$,
      we have $\DefSet{\cell} \subseteq \RSample$
      and $\RSample \cap \KillSet{\cell} = \emptyset$.
    \item \itemlab{axiom:2} If $\DefSet{\cell} \subseteq \RSample$
      and $\KillSet{ \cell }\cap \RSample=\emptyset$,
      then $\cell\in \RegionsSet{\RSample}$.
    \end{compactenum}
\end{defn}

For a detailed discussion of the Clarkson-Shor framework, see
Har-Peled \cite[Chapter 8]{h-gaa-11}.  We next provide a quick example
for the reader unfamiliar with this framework.

\begin{figure}
    \centerline{\includegraphics[scale=0.8]{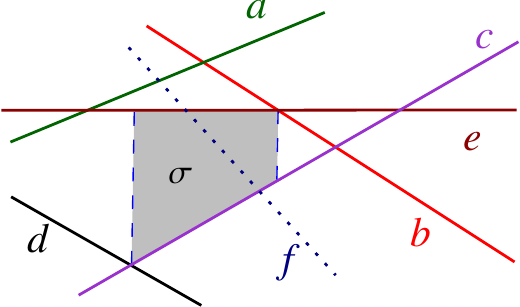}}
    \caption{Vertical decomposition and the Clarkson-Shor
       framework. The defining set of $\Trap$ is
       $\DefSet{\Trap} = \brc{a,b,c,d}$, and its stopping set is
       $\KillSet{\Trap} = \brc{f}$. }
    \figlab{stopping:set}
\end{figure}
\begin{example}
    Consider a set of segments $\CSet$ in the plane.  For a subset
    $\RSample \subseteq \CSet$, let $\ArrVD{\RSample}$ denote the
    \emphi{vertical decomposition} of the plane formed by the
    arrangement $\Arr(\RSample)$ of the segments of $\RSample$. This
    is the partition of the plane into interior disjoint vertical
    trapezoids formed by erecting vertical walls through each vertex
    of $\Arr(\RSample)$. Here, each object is a segment, a region is a
    {vertical trapezoid}, and $\RegionsSet{\RSample}$ is the set of
    vertical trapezoids in $\ArrVD{\RSample}$.  Each trapezoid $\Trap
    \in \RegionsSet{\RSample}$ is defined by at most four segments (or
    lines) of $\RSample$ that define the region covered by the
    trapezoid $\Trap$, and this set of segments is
    $\DefSet{\Trap}$. Here, $\KillSet{\Trap}$ is the set of segments
    of $\CSet$ intersecting the interior of the trapezoid $\Trap$ (see
    \figref{stopping:set}).
\end{example}

\begin{lemma}
    Consider a canonical decomposition of a domain into simple cells,
    induced by a set of objects, that complies with the Clarkson-Shor
    framework \cite{c-arscg-88, cs-arscg-89}. Then, performing a point location
    query in such a domain is equivalent to computing a basis of a
    violator space.
\end{lemma}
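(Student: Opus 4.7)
The plan is, for each query point $\query \in \Domain$ and input set $\CSet$ of objects, to exhibit a violator space on $\CSet$ whose basis is exactly the defining set $\DefSet{\cell}$ of the cell $\cell \in \RegionsSet{\CSet}$ containing $\query$. Under a standard general-position assumption, for every $\CSetA \subseteq \CSet$ there is a unique cell $\cellX{\CSetA} \in \RegionsSet{\CSetA}$ that contains $\query$, and I would simply set $\vFuncX{\CSetA} = \KillSet{\cellX{\CSetA}}$. Everything else should flow from the two Clarkson-Shor axioms \itemref{axiom:1} and \itemref{axiom:2} of \defref{c:s:framework}.

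Next I would verify the three conditions in \defref{violator:space}. Consistency is immediate since \itemref{axiom:1} already guarantees $\CSetA \cap \KillSet{\cellX{\CSetA}} = \emptyset$. For locality, suppose $\CSetA \subseteq \CSetB$ with $\vFuncX{\CSetA} \cap \CSetB = \emptyset$. Then $\DefSet{\cellX{\CSetA}} \subseteq \CSetA \subseteq \CSetB$ and $\KillSet{\cellX{\CSetA}} \cap \CSetB = \emptyset$, so \itemref{axiom:2} places $\cellX{\CSetA}$ inside $\RegionsSet{\CSetB}$; because this cell still contains $\query$ and the cell-containing-$\query$ is unique, it must coincide with $\cellX{\CSetB}$, and hence $\vFuncX{\CSetA} = \vFuncX{\CSetB}$. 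Monotonicity then follows automatically from consistency and locality, as the paper already notes immediately after \defref{violator:space}.

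To close the reduction, I would show that $\Basis := \DefSet{\cellX{\CSet}}$ is, up to minimization, a basis of $\CSet$ in this violator space: by \itemref{axiom:1} we have $\Basis \subseteq \CSet$, and by \itemref{axiom:2} the cell $\cellX{\CSet}$ reappears in $\RegionsSet{\Basis}$, so uniqueness of the containing cell yields $\cellX{\Basis} = \cellX{\CSet}$, and therefore $\vFuncX{\Basis} = \vFuncX{\CSet}$. The combinatorial dimension is thus bounded by the Clarkson-Shor constant $\cDim$, and once a basis is returned, the cell containing $\query$ can be reconstructed from its constant-size defining set via the problem-specific geometric primitive underlying \compBasis. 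The one point I expect to need care is degeneracy — the cell containing $\query$ need not be unique if $\query$ sits on a lower-dimensional feature — which I would resolve in the usual way, either by a symbolic perturbation or by invoking the lexicographic tie-breaking rule already imposed on bases in the paper (see the footnote attached to \compBasis), thereby restoring the uniqueness used in the locality step.
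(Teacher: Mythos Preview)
Your proposal is correct and follows essentially the same approach as the paper: define $\vFuncX{\CSetA} = \KillSet{\cellX{\CSetA}}$ for the unique cell containing the query point, read off consistency from axiom~\itemref{axiom:1} and locality from axiom~\itemref{axiom:2}, and note that monotonicity is then automatic. Your write-up is in fact more detailed than the paper's own proof, which simply asserts the two correspondences without spelling out the uniqueness-of-the-containing-cell step that you (rightly) make explicit for locality; your additional paragraph verifying that $\DefSet{\cellX{\CSet}}$ is a basis and your remark on degeneracy are also beyond what the paper writes, but entirely in its spirit.
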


\begin{proof}
    This follows readily from definition, but we include the details
    for the sake of completeness. We use the notations of
    \defref{c:s:framework} above.

    So consider a fixed point $\pnt \in \Domain$, and the task at hand
    is to compute the cell $\cell \in \RegionsSet{\CSet}$ that
    contains $\pnt$ (here, assuming general position implies that
    there is a unique such cell). In particular, for
    $\RSample \subseteq \CSet$, we define $\Basis = \basisX{\RSample}$
    to be the defining set of the cell $\cell$ of
    $\RegionsSet{\RSample}$ that contains $\pnt$, and the conflict
    list to be $\vFuncX{\RSample} = \KillSet{\cell}$.

    We claim that computing $\basisX{\CSet}$ is a violator space
    problem. We need to verify the conditions of
    \defref{violator:space}, which is satisfyingly easy. Indeed,
    consistency is condition \itemref{axiom:1}, and locality is
    condition \itemref{axiom:2} in \defref{c:s:framework}. (We remind
    the reader that consistency and locality implies monotonicity, and
    thus we do not have to verify that it holds.)%
\end{proof}

It seems that for all of these point location problems, one can solve
them directly using the \LP-type technique. However, stating these
problems as violator space instances is more natural as it avoids the
need to explicitly define an artificial ordering over the bases, which
can be quite tedious and not immediate (see \apndref{l:p:type} for an
example).


\begin{figure}[t]
    \centerline{%
       \begin{algorithmEnv}%
           \solveVS{}$\pth{\WSet, \CSetA \bigr. }\Bigr.$:%
           \+\+\\
           $\permut{\cst_1,\ldots, \cst_m}$: A random permutation of
           the constraints of $\CSetA$.\\
           $\Basis_0 \leftarrow \compBasis\pth{ \WSet}$\\
           \For $i=1$ to $m$ \Do\+\\
           \If \algVio{}($\cst_i, \Basis_{i-1}$) \Then\+\\
           $\Basis_i \leftarrow \solveVS\pth{ \WSet \cup \Basis_{i-1}
              \cup \brc{\cst_i},\, \brc{\cst_1,\ldots,
                 \cst_i}\bigr. }$
           \-\\
           \Else \+\\
           $\Basis_i \leftarrow \Basis_{i-1}$
           \-\-\\
           \Return $\Basis_m$
       \end{algorithmEnv}%
    }%
    \caption{The algorithm for solving violator space problems.  The
       parameter $\WSet$ is a set of $O(\cDim^2)$ witness constraints,
       and $\CSetA$ is a set of $m$ constraints.  The function return
       $\basisX{ \WSet \cup \CSetA}$.  To solve a given violator
       space, defined implicitly by the set of constraints $\CSet$,
       and the functions \algVio and \compBasis, one calls
       \solveVS{}$\pth{ \brc{}, \CSet }$.}
    \figlab{V:S:algorithm}
\end{figure}

\subsection{The algorithm for computing the basis of %
   a violator space}

The input is a violator space $\VSpace = (\CSet, \vFuncC)$
with $n = \cardin{\CSet}$ constraints, having combinatorial dimension
$\cDim$.

\subsubsection{Description of the algorithm}

The algorithm is a variant of Seidel's algorithm \cite{s-sdlpc-91} --
it picks a random permutation of the constraints, and computes
recursively in a randomized incremental fashion the basis of the
solution for the first $i$ constraints. Specifically, if the $i$\th
constraint violates the basis $\Basis_{i-1}$ computed for the first
$i-1$ constraints, it calls recursively, adding the constraints of
$\Basis_{i-1}$ and the $i$\th constraint to the set of constraints
that must be included whenever computing a basis (in the recursive
calls).  The resulting code is depicted in \figref{V:S:algorithm}.

The only difference with the original algorithm of Seidel, is that the
recursive call gets the set
$\WSet \cup \Basis_{i-1} \cup \brc{\cst_i}$ instead of
$\basisX{\Basis_{i-1} \cup \brc{\cst_i}}$ (which is a smaller set).
This modification is required because of the potential cycling between
bases in a violator space.

\subsubsection{Analysis}

The key observation is that the depth of the recursion of \solveVS{}
is bounded by $\cDim$, where $\cDim$ is the combinatorial dimension of
the violator space. Indeed, if $\cst_i$ violates a basis, the
constraints added to the witness set $\WSet$ guarantee that any
subsequent basis computed in the recursive call contains $\cst_i$, as
testified by the following lemma.

\begin{lemma}
    \lemlab{witness}%
    Consider any set $\CSetA \subseteq \CSet$. Let
    $\Basis = \basisX{\CSetA}$, and let $\cst$ be a constraint in
    $\CSet \setminus \CSetA$ that violates $\Basis$. Then, for any
    subset $\CSetB$ such that
    \begin{math}
        \Basis \cup \brc{\cst} \subseteq \CSetB \subseteq \CSetA \cup
        \brc{\cst},
    \end{math}
    we have that $\cst \in \basisX{\CSetB}$.
\end{lemma}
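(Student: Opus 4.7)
The plan is to proceed by contradiction: assume $\cst \notin \basisX{\CSetB}$, set $\BasisA = \basisX{\CSetB}$, and show that this forces $\cst \notin \vFuncX{\Basis}$, contradicting the hypothesis that $\cst$ violates $\Basis$. The key intermediate object is the set $\CSetB \setminus \brc{\cst}$; I will show that its violator set coincides with both $\vFuncX{\Basis}$ and $\vFuncX{\BasisA}$, forcing $\vFuncX{\Basis} = \vFuncX{\BasisA}$.

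First I would observe that since $\Basis = \basisX{\CSetA} \subseteq \CSetA$ and $\cst \in \CSet \setminus \CSetA$, the constraint $\cst$ is not in $\Basis$, so $\Basis \subseteq \CSetB \setminus \brc{\cst}$. Under the contradiction assumption, $\BasisA \subseteq \CSetB \setminus \brc{\cst}$ as well. This positions both $\Basis$ and $\BasisA$ as subsets of $\CSetB \setminus \brc{\cst}$, which is in turn sandwiched between $\Basis \cup \brc{\cst} \setminus \brc{\cst} \subseteq \CSetA$ and $\CSetB$.

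Next I would apply the locality axiom in two ways. Since $\Basis \subseteq \CSetB \setminus \brc{\cst} \subseteq \CSetA$ and $\vFuncX{\Basis} = \vFuncX{\CSetA}$ implies $\vFuncX{\Basis} \cap \CSetA = \emptyset$ by consistency, locality gives $\vFuncX{\Basis} = \vFuncX{\CSetB \setminus \brc{\cst}}$. Similarly, from $\BasisA \subseteq \CSetB \setminus \brc{\cst} \subseteq \CSetB$ together with $\vFuncX{\BasisA} = \vFuncX{\CSetB}$ and the consistency relation $\vFuncX{\CSetB} \cap \CSetB = \emptyset$, locality gives $\vFuncX{\BasisA} = \vFuncX{\CSetB \setminus \brc{\cst}}$. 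Equating these two identities yields $\vFuncX{\Basis} = \vFuncX{\BasisA} = \vFuncX{\CSetB}$.

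Finally I would close the loop: the hypothesis that $\cst$ violates $\Basis$ says $\cst \in \vFuncX{\Basis}$, so $\cst \in \vFuncX{\CSetB}$. But $\cst \in \CSetB$, so consistency ($\vFuncX{\CSetB} \cap \CSetB = \emptyset$) gives $\cst \notin \vFuncX{\CSetB}$, a contradiction. The only subtle point — really the main thing to get right — is verifying that the locality axiom applies in both directions, which rests on the preliminary observation that $\cst$ lies in neither $\Basis$ nor (under our assumption) $\BasisA$, so that $\CSetB \setminus \brc{\cst}$ genuinely contains both bases. Once that bookkeeping is done, the rest is a two-line application of the violator-space axioms.
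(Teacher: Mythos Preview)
Your proof is correct and follows essentially the same route as the paper's: both argue by contradiction, introduce $\BasisA = \basisX{\CSetB}$ with $\cst \notin \BasisA$, use $\CSetB \setminus \brc{\cst}$ as the pivot set, and show $\vFuncX{\Basis} = \vFuncX{\CSetB \setminus \brc{\cst}} = \vFuncX{\BasisA}$ to derive $\cst \notin \vFuncX{\Basis}$. The only cosmetic difference is that the paper phrases both equalities via the monotonicity axiom, whereas you invoke locality directly (after checking the needed disjointness via consistency); since monotonicity is derived from consistency and locality anyway, the two arguments are the same in substance.
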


\begin{proof}
    Assume that this is false, and let $\CSetB$ be the bad set with
    $\BasisA = \basisX{\CSetB}$, such that
    $\cst \notin \BasisA$.  Since $\cst \in \CSetB$, by consistency,
    $\cst \notin \vFuncX{\CSetB}$, see \defref{violator:space}.  By
    definition $\vFuncX{\CSetB} = \vFuncX{\BasisA}$, which implies
    that $\cst \notin \vFuncX{\BasisA}$; that is, $\cst$ does not
    violate $\BasisA$.

    Now, by monotonicity, we have
    \begin{math}
        \vFuncX{\CSetB}%
        =%
        \vFuncX{\CSetB \setminus \brc{\cst}}%
        =%
        \vFuncX{\BasisA}.
    \end{math}

    By assumption, $\Basis \subseteq \CSetB \setminus \brc{\cst}$,
    which implies, again by monotonicity, as
    $\Basis \subseteq \CSetB \setminus \brc{\cst} \subseteq \CSetA$,
    that
    \begin{math}
        \vFuncX{\CSetA}%
        = %
        \vFuncX{\CSetB \setminus \brc{\cst}} =
        \vFuncX{\Basis},
    \end{math}
    as $\Basis = \basisX{\CSetA}$. But that implies that
    \begin{math}
        \vFuncX{\Basis}%
        =%
        \vFuncX{\CSetB \setminus \brc{\cst}}%
        =%
        \vFuncX{\BasisA}.
    \end{math}
    As $\cst \notin \vFuncX{\CSetB}$, this implies that $\cst$ does
    not violate $\Basis$, which is a contradiction.
\end{proof}

\begin{lemma}
    \lemlab{d:depth}%
    The depth of the recursion of \solveVS{}, see
    \figrefpage{V:S:algorithm}, is at most $\cDim$, where $\cDim$ is
    the combinatorial dimension of the given instance.
\end{lemma}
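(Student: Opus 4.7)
The plan is to show by induction on $d$ the stronger invariant that at every recursive call $\solveVS(\WSet_d, \CSetA_d)$ of depth $d$ (with the top call at depth $0$), there is a \emph{forced set} $T_d \subseteq \WSet_d$ of $d$ distinct constraints such that every basis $\basisX{\CSetB}$ with $\WSet_d \subseteq \CSetB \subseteq \WSet_d \cup \CSetA_d$ contains $T_d$. Since $|\basisX{\cdot}| \leq \cDim$, this immediately gives $d \leq \cDim$.

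The base case $d=0$ is trivial with $T_0 = \emptyset$. For the inductive step, assume we are inside a depth-$d$ call with forced set $T_d$, and at the $i$-th iteration the algorithm recurses because $\cst_i$ violates $\Basis_{i-1}$. Here $\Basis_{i-1} = \basisX{\WSet_d \cup \{\cst_1,\ldots,\cst_{i-1}\}}$, so the induction hypothesis, applied to $\CSetB = \WSet_d \cup \{\cst_1,\ldots,\cst_{i-1}\}$, already gives $T_d \subseteq \Basis_{i-1}$. I propose $T_{d+1} := T_d \cup \{\cst_i\}$ as the forced set for the recursive call $\solveVS(\WSet_{d+1}, \CSetA_{d+1})$, where $\WSet_{d+1} = \WSet_d \cup \Basis_{i-1} \cup \{\cst_i\}$ and $\CSetA_{d+1} = \{\cst_1,\ldots,\cst_i\}$. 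Consistency (see \defref{violator:space}) forces $\cst_i \notin \Basis_{i-1} \supseteq T_d$, so $|T_{d+1}| = d+1$.

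The forcing property of $T_{d+1}$ for any $\CSetB$ with $\WSet_{d+1} \subseteq \CSetB \subseteq \WSet_{d+1} \cup \CSetA_{d+1}$ will follow from two facts. First, using $\Basis_{i-1} \subseteq \WSet_d \cup \{\cst_1,\ldots,\cst_{i-1}\}$, one verifies $\WSet_{d+1} \cup \CSetA_{d+1} \subseteq \WSet_d \cup \CSetA_d$, so the outer induction hypothesis applies and gives $T_d \subseteq \basisX{\CSetB}$. Second, \lemref{witness}, invoked with $\CSetA := \WSet_d \cup \{\cst_1,\ldots,\cst_{i-1}\}$ (so $\Basis = \Basis_{i-1}$) and $\cst := \cst_i$, gives $\cst_i \in \basisX{\CSetB}$, since the required sandwich $\Basis_{i-1} \cup \{\cst_i\} \subseteq \CSetB \subseteq \CSetA \cup \{\cst_i\}$ holds. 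The only real care point is this bookkeeping — aligning the inclusions so that both the induction hypothesis and \lemref{witness} apply to the same $\CSetB$; once that is in hand, the basis size bound immediately yields $d \leq \cDim$.
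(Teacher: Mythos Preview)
Your proof is correct and follows essentially the same approach as the paper: accumulate the violating constraints along the recursion chain into a set that (via \lemref{witness}) must lie in every subsequent basis, then invoke the basis-size bound $|\basisX{\cdot}|\le\cDim$. Your forced-set invariant is a slightly stronger and more explicitly stated version of the paper's claim that $\cstA_j\in\basisX{\WSet_t}$ for all $t\ge j$, and your bookkeeping of the inclusions needed to invoke \lemref{witness} is more careful than the paper's sketch.
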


\begin{proof}
    Consider a sequence of $k$ recursive calls, with
    \begin{math}
        \WSet_0%
        \subseteq%
        \WSet_1%
        \subseteq%
        \WSet_2 \subseteq \cdots \subseteq \WSet_k
    \end{math}
    as the different values of the parameter $\WSet$ of \solveVS{},
    where $\WSet_0 = \emptyset$ is the value in the top-level
    call. Let $\cstA_j$, for $j=1,\ldots,k$, be the constraint whose
    violation triggered the $j$\th level call. Observe that
    $\cstA_{j} \in \WSet_j$, and as such all these constraints must be
    distinct (by consistency).  Furthermore, we also included the
    basis $\Basis_j'$, that $\cstA_j$ violates, in the witness set
    $\WSet_j$. As such, we have that   
    \begin{math}
        \WSet_j = \brc{ \cstA_1, \ldots, \cstA_j } \cup \Basis_1' \cup
        \cdots \cup \Basis_j'.
    \end{math}
    By \lemref{witness}, in any basis computation done inside the
    recursive call $\solveVS\pth{ \WSet_j, \ldots }$, it must be that
    $\cstA_j \in \basisX{\WSet_t}$, for any $t \geq j$.  As such, we
    have $\cstA_1, \ldots, \cstA_k \in \basisX{\WSet_k}$. Since a
    basis can have at most $\cDim$ elements, this is possible only if
    $k \leq \cDim$, as claimed.
\end{proof}

\begin{theorem}
    \thmlab{v:s:n:space}
    Given an instance of violator space 
    $\VSpace = (\CSet, \vFuncC)$ with $n$ constraints, and
    combinatorial dimension $\cDim$, the algorithm 
    \noindent\solveVS{}$\pth{ \emptyset, \CSet }$, see 
    \figref{V:S:algorithm}, computes $\basisX{\CSet}$.  The expected
    number of violation tests performed is bounded by
    $O(\cDim^{\cDim+1} n)$.  Furthermore, the algorithm performs in
    expectation $O\pth{ \pth{ \cDim\ln n }^\cDim }$ basis computations
    (on sets of constraints that contain at most $\cDim(\cDim+1)$
    constraints).

    In particular, for constant combinatorial dimension $\cDim$, with
    violation test and basis computation that takes constant time,
    this algorithm runs in $O(n)$ expected time.
\end{theorem}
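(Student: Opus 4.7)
The plan is to prove three statements separately: (i) correctness of \solveVS, (ii) the bound on violation tests, and (iii) the bound on basis computations. The recursion depth bound from \lemref{d:depth} is the linchpin throughout: it lets us set up clean recurrences stratified by recursion level, with the base case being ``no further recursive calls are possible.''

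\textbf{Correctness.} I would prove by induction on the recursion depth (which is finite by \lemref{d:depth}) that \solveVS$(\WSet, \CSetA)$ returns $\basisX{\WSet \cup \CSetA}$. Within a single invocation, an inner induction on $i$ shows that $\Basis_i = \basisX{\WSet \cup \{\cst_1, \ldots, \cst_i\}}$. The base case $i=0$ is immediate. For the step, if $\cst_i$ does not violate $\Basis_{i-1}$, then by the inductive hypothesis $\vFuncX{\Basis_{i-1}} \cap \pth{\WSet \cup \{\cst_1,\ldots,\cst_{i-1}\}} = \emptyset$, and by consistency $\cst_i \notin \vFuncX{\Basis_{i-1}}$ is exactly what is needed to invoke locality on $\Basis_{i-1} \subseteq \WSet \cup \{\cst_1,\ldots,\cst_i\}$, yielding that $\Basis_{i-1}$ is still a basis of the enlarged set. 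Otherwise the recursive call on $(\WSet \cup \Basis_{i-1} \cup \{\cst_i\},\; \{\cst_1,\ldots,\cst_i\})$ returns the correct basis by the outer induction, since $\Basis_{i-1} \subseteq \WSet \cup \{\cst_1,\ldots,\cst_{i-1}\}$ makes the union equal to $\WSet \cup \{\cst_1,\ldots,\cst_i\}$.

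\textbf{Counting violation tests and basis computations.} Here I would use backward analysis on the random permutation chosen at each recursion level. Fix a call of \solveVS with an outer set of size $m$. By backward analysis applied to the set $\{\cst_1,\ldots,\cst_i\}$, the probability that $\cst_i$ triggers a recursive call is the probability that removing it changes the basis, which is at most $\cDim/i$ since any such ``critical'' constraint must lie in the basis of $\WSet \cup \{\cst_1,\ldots,\cst_i\}$ (of size at most $\cDim$). Let $T(m,k)$ and $B(m,k)$ denote the expected number of violation tests and basis computations done by \solveVS on input of size $m$ when it sits at recursion depth $k$ below the maximum (so $T(m,0)=m$ and $B(m,0)=1$ by \lemref{d:depth}). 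The recurrences
\[
T(m,k) \;\le\; m + \sum_{i=1}^{m} \frac{\cDim}{i}\, T(i,k-1),
\qquad
B(m,k) \;\le\; 1 + \sum_{i=1}^{m} \frac{\cDim}{i}\, B(i,k-1),
\]
unfold by induction on $k$ to $T(m,k) = O(\cDim^{k+1} m)$ and $B(m,k) = O((\cDim \ln m)^{k})$ respectively. Setting $k = \cDim$, that is, starting at the top call of the recursion, yields the stated bounds $O(\cDim^{\cDim+1} n)$ violation tests and $O((\cDim \ln n)^{\cDim})$ basis computations.

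\textbf{Wrap-up and main obstacle.} Finally I would check that the size of each set handed to \compBasis is $O(\cDim^2)$: at each of the at most $\cDim$ levels the witness grows by a basis (size $\le \cDim$) plus one triggering constraint, so $|\WSet| \le \cDim(\cDim+1)$, matching the interface of \compBasis. The constant-dimension $O(n)$ bound then follows by treating $\cDim$ as $O(1)$. The part that needs most care is the correctness step for the non-violating case: one must resist the temptation to conclude $\Basis_{i-1}$ is still a basis by a ``local'' argument, and instead use the locality axiom with the fully assembled set $\WSet \cup \{\cst_1,\ldots,\cst_i\}$, because violator spaces lack the scoring function that trivializes this step in the \LP-type setting.
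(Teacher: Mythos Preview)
Your proposal is correct and follows essentially the same approach as the paper: both use \lemref{d:depth} to bound the recursion depth, apply backward analysis to get the $\cDim/i$ bound on the probability of a recursive call, and set up the same level-indexed recurrences for violation tests and basis computations. The only differences are cosmetic (you index levels from the bottom while the paper indexes from the top, and your correctness argument via an inner induction on $i$ is spelled out a bit more carefully than the paper's), and one small misattribution: in the non-violating case, $\cst_i \notin \vFuncX{\Basis_{i-1}}$ is the hypothesis itself, not a consequence of consistency---consistency is what gives $\vFuncX{\Basis_{i-1}} \cap (\WSet \cup \{\cst_1,\ldots,\cst_{i-1}\}) = \emptyset$ once you know $\Basis_{i-1}$ is the basis of that set.
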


\begin{proof}
    \lemref{d:depth} implies that the recursion tree has bounded
    depth, and as such this algorithm must terminate.  The correctness
    of the result follows by induction on the depth of the recursion.
    By \lemref{d:depth}, any call of depth $\cDim$, cannot find any
    violated constraint in its subproblem, which means that the
    returned basis is indeed the basis of the constraints specified in
    its subproblem. Now, consider a recursive call at depth
    $j < \cDim$, which returns a basis $\Basis_{m}$ when called on the
    constraints $\permut{\cst_1, \ldots, \cst_m}$. The basis
    $\Basis_{m}$ was computed by a recursive call on some prefix
    $\permut{\cst_1,\ldots,\cst_i}$, which was correct by (reverse)
    induction on the depth, and none of the constraints
    $\cst_{i+1}, \ldots, \cst_m$ violates $\Basis_m$, which implies
    that the returned basis is a basis for the given subproblem. Thus,
    the result returned by the algorithm is correct.

    Let $C_k(m)$ be the expected number of basis computations
    performed by the algorithm when run at recursion depth $k$, with
    $m$ constraints. We have that
    $C_{\cDim}(m) =1$, and
    \begin{math}
        C_k(m) = 1+ \Ex{\Bigl. \sum_{i=1}^m X_i C_{k+1}(i)},
    \end{math}
    where $X_i$ is an indicator variable that is one, if and only if
    the insertion of the $i$\th constraint caused a recursive call. We
    have, by backward analysis, that
    \begin{math}
        \Ex{X_i} = \Prob{X_i=1} \leq \min\pth{ {\cDim}/{i}, 1}.
    \end{math}
    As such, by linearity of expectations, we have
    \begin{math}
        C_k(m) \leq 1+ \sum_{i=1}^m \min\pth{ \frac{\cDim}{i}, 1}
        C_{k+1}(i).
    \end{math}
    As such, we have
    \begin{align*}
      C_{\cDim-1}(m)%
      \leq%
      1+ \sum_{i=1}^m \min\pth{ \frac{\cDim}{i},
      1} C_{\cDim}(i)%
      =%
      \cDim +  \sum_{i=\cDim}^m  \frac{\cDim}{i}%
      \leq%
      \cDim  + \cDim \ln(m/\cDim)
      \leq%
      \cDim \ln(m),
    \end{align*}
    assuming $\cDim \geq e$.  We conclude that
    \begin{math}
        C_{\cDim-1}(m) \leq \cDim \ln (m)
    \end{math}
    and
    $C_0(m) = O\pth{ \pth{ \cDim\ln m }^\cDim }$.

    As for the expected number of violation tests, a similar analysis
    shows that
    \begin{math}
        V_\cDim(m) = m%
    \end{math}
    and 
    \begin{math}
        V_{\cDim - j}(m) = m+ \sum_{i=1}^m \min\pth{ \frac{\cDim}{i},
           1} V_{\cDim - (j -1)}(i).
    \end{math} %
    As such, assuming inductively that
    $V_{\cDim - (j-1)}(m) \leq j \cDim^{j-1} m$, we have that
    \begin{align*}
      V_{\cDim - j}(m)%
      &
        \leq%
        m+ \sum_{i=1}^m  \frac{\cDim}{i}
        j\cDim^{j-1} i
        \leq%
        m+ j \cDim^{j} m
        \leq%
        ( j+1) \cDim^{j} m,
    \end{align*} %
    implying that $V_0(m) = O(\cDim^{\cDim+1} m)$, which also bounds
    the running time.
\end{proof}%

\begin{remark}
    While the constants in \thmref{v:s:n:space} are not pretty, we
    emphasize that the $O$ notation in the bounds do not hide
    constants that depends on $\cDim$.
\end{remark}

\subsection{Solving the violator space problem with %
   constant space and linear time}
\seclab{v:s:constant:space}

The idea for turning \solveVS{} into an algorithm that uses little
space, is observing that the only thing we need to store (implicitly)
is the random permutation used by \solveVS{}.

\subsubsection{Generating a random permutation %
   using pseudo-random generators}

To avoid storing the permutation, one can use pseudo-random techniques
to compute the permutation on the fly. For our algorithm, we do not
need a permutation - any random sequence that has uniform distribution
over the constraints and is sufficiently long, would work.
 
\begin{lemma}
    \lemlab{p:r:g}%
    Fix an integer $\pDim> 0$, a prime integer $n$, and an integer
    constant $\constPRG \geq 12$. One can compute a random sequence of
    numbers
    $X_1, \ldots, X_{\constPRG n} \in \IntRange{n} = \brc{1,\ldots,
       n}$, such that:
    \begin{compactenum}[\qquad(A)]
        \item The probability of $X_i =j$ is $1/n$, for any
        $i \in \IntRange{n}$ and $j \in \IntRange{ \constPRG n}$.
        \item The sequence is $\pDim$-wise independent.
        \item Using $O(\constPRG \pDim)$ space, given an index $i$,
        one can compute $X_i$ in $O(\pDim)$ time.
    \end{compactenum}
\end{lemma}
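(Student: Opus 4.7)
The plan is to use the classical construction of $\pDim$-wise independent random variables via polynomial hashing over the prime field $\mathbb{F}_n$. Recall that if $a_0, \ldots, a_{\pDim-1}$ are drawn independently and uniformly from $\mathbb{F}_n$, and $f(x) = \sum_{k=0}^{\pDim-1} a_k x^k \bmod n$, then for any $\pDim$ distinct points $j_1, \ldots, j_\pDim \in \mathbb{F}_n$ the vector $(f(j_1), \ldots, f(j_\pDim))$ is uniformly distributed on $\mathbb{F}_n^\pDim$. This is immediate from the invertibility of the $\pDim \times \pDim$ Vandermonde matrix over $\mathbb{F}_n$, which relies on $n$ being prime.

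A single such polynomial produces only $n$ samples, but we need $\constPRG n$. I would therefore partition $\IntRange{\constPRG n}$ into $\constPRG$ consecutive blocks of size $n$ and use an independent degree-$(\pDim-1)$ polynomial per block. Concretely, for each $b \in \IntRange{\constPRG}$, sample coefficients $a_{b,0}, \ldots, a_{b,\pDim-1} \in \mathbb{F}_n$ independently and uniformly, form the polynomial $f_b(x) = \sum_{k=0}^{\pDim-1} a_{b,k} x^k \bmod n$, and define $X_{(b-1)n + j + 1} = f_b(j) + 1$ for $j = 0, \ldots, n-1$, so that each $X_i$ lies in $\IntRange{n}$.

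Verification is then routine. For (A), a single evaluation of a uniform polynomial over $\mathbb{F}_n$ is uniform on $\mathbb{F}_n$, so the shift by $1$ makes $X_i$ uniform on $\IntRange{n}$. For (C), the storage consists only of the $\constPRG\pDim$ coefficients, and computing $X_i$ requires identifying the block $b = \lceil i/n \rceil$ and evaluating $f_b$ at one point using Horner's rule in $O(\pDim)$ arithmetic operations. For (B), fix any $\pDim$ distinct indices $t_1,\ldots,t_\pDim$ and partition them by block. Values from distinct blocks are independent because their polynomials share no randomness. Values from a single block — at most $\pDim$ of them, evaluated at distinct inputs in $\mathbb{F}_n$ — are jointly uniform by the Vandermonde argument above. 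Combining these two observations yields that $(X_{t_1}, \ldots, X_{t_\pDim})$ is uniform on $\IntRange{n}^\pDim$, which is exactly $\pDim$-wise independence.

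There is no genuine obstacle; the proof is essentially a packaging of standard $k$-wise independence tricks. The only place subtlety enters is the Vandermonde invertibility over $\mathbb{F}_n$, which is precisely the reason primality of $n$ is hypothesized, and the blockwise partition, which is how the $\constPRG n$-length output is reconciled with the $n$-point cap imposed by working in $\mathbb{F}_n$.
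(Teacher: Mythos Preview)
Your proposal is correct and is essentially the same construction as the paper's: random polynomials over $\mathbb{F}_n$ evaluated at all of $\mathbb{F}_n$, with $\constPRG$ independent polynomials concatenated to reach length $\constPRG n$. The only cosmetic differences are that the paper uses degree~$\pDim$ rather than $\pDim-1$ (harmlessly one more coefficient than necessary) and cites Mulmuley for the details you spell out via the Vandermonde argument.
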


\begin{proof}
    This is a standard pseudo-random generator (\PRG) technique,
    described in detail by Mulmuley \cite[p.  399]{m-cgitr-94}.  We
    outline the idea. Randomly pick $\pDim+1$ coefficients
    $\alpha_0, \ldots, \alpha_{\pDim} \in \brc{0, \ldots, n-1}$
    (uniformly and independently), and consider the random polynomial
    $f(x)= \sum_{i=0}^{\pDim} \alpha_i x^i$, and set
    $p(x)= (f(x) \mod n)$.  Now, set $X_i = 1 + p(i)$, for
    $i=1,\ldots, n$. It is easy to verify that the desired properties
    hold. To extent this sequence to be of the desired length, pick
    randomly $\constPRG$ such polynomials, and append their sequences
    together to get the desired longer sequence.  It is easy to verify
    that the longer sequence is still $\pDim$-wise independent.
\end{proof}

We need the following concentration result on $2k$-wise independent
sequences -- for the sake completeness, we provide a proof at
\apndref{s:p:s}.

\newcommand{\LemmaCehbychevKBody}{%
   Let $X_1, \ldots, X_n$ be $n$ random indicator variables that are
   $2\pDimB$-wise independent, where $p = \Prob{X_i=1} \geq 1/n$, for
   all $i$. Let $\Bigl.Y = \sum_i X_i$, and let $\mu = \Ex{Y} = p n$.
   Then, we have that
    \begin{math}        
         \Prob{\bigl. Y = 0 }%
        \leq%
        \Prob{\bigl. \cardin{Y - \mu} \geq \mu}%
        \leq%
        \pth{4\pDimB^2 /\mu \bigr. }^\pDimB / \sqrt{2\pDimB}.
    \end{math}
}

\begin{lemma}[Theorem A.2.1, \cite{m-cgitr-94}]%
    \lemlab{chebychev:k}%
    \LemmaCehbychevKBody
\end{lemma}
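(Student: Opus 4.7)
The plan is the standard moment method: apply Markov's inequality to $(Y-\mu)^{2\pDimB}$ and estimate the resulting central moment using $2\pDimB$-wise independence. The first inequality $\Prob{Y=0}\leq\Prob{|Y-\mu|\geq\mu}$ is immediate, since $Y=0$ forces $|Y-\mu|=\mu$. For the second, Markov gives
\begin{equation*}
    \Prob{|Y-\mu|\geq \mu} \;\leq\; \Ex{(Y-\mu)^{2\pDimB}}\big/\mu^{2\pDimB},
\end{equation*}
so the task reduces to estimating the central moment. Introduce the centered variables $Z_i = X_i - p$; these are still $2\pDimB$-wise independent, satisfy $\Ex{Z_i}=0$, and, because $X_i\in\{0,1\}$ with $\Prob{X_i=1}=p$, one checks directly that $|\Ex{Z_i^j}|\leq p$ for every integer $j\geq 2$.

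Expanding and using linearity,
\begin{equation*}
    \Ex{(Y-\mu)^{2\pDimB}} \;=\; \sum_{(i_1,\ldots,i_{2\pDimB})\in\IntRange{n}^{2\pDimB}} \Ex{Z_{i_1}\cdots Z_{i_{2\pDimB}}}.
\end{equation*}
By $2\pDimB$-wise independence each summand factors over the distinct indices in the tuple, and since $\Ex{Z_i}=0$, any tuple containing an index of multiplicity exactly one vanishes. So only tuples whose distinct indices each appear at least twice contribute; such a tuple has at most $\pDimB$ distinct indices. Group the surviving tuples by the set partition $\pi$ of $\{1,\ldots,2\pDimB\}$ recording which positions receive the same index: if $\pi$ has $k\leq\pDimB$ blocks (each of size $\geq 2$), the number of matching tuples is $n(n-1)\cdots(n-k+1)\leq n^k$, and each contributes at most $p^k$ in absolute value, so $\pi$ contributes at most $(np)^k=\mu^k$. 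Writing $N(2\pDimB,k)$ for the number of set partitions of $\{1,\ldots,2\pDimB\}$ into $k$ parts of size $\geq 2$,
\begin{equation*}
    \Ex{(Y-\mu)^{2\pDimB}} \;\leq\; \sum_{k=1}^{\pDimB} N(2\pDimB,k)\,\mu^k.
\end{equation*}

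To bound $N(2\pDimB,k)$, pick an ordered size-$2$ anchor inside each of the $k$ blocks and then distribute the remaining $2\pDimB-2k$ positions among the $k$ blocks, giving $N(2\pDimB,k)\leq \frac{(2\pDimB)!}{2^k\,k!\,(2\pDimB-2k)!}\,k^{2\pDimB-2k}$. The extremal $k=\pDimB$ term is the perfect-matching count $(2\pDimB-1)!! = (2\pDimB)!/(2^\pDimB\pDimB!)$, which Stirling controls by roughly $(2\pDimB/e)^\pDimB\sqrt{2}$. Combined with the hypothesis $p\geq 1/n$ (equivalently $\mu\geq 1$), the sum collapses into the $k=\pDimB$ contribution up to constants, and after absorbing the lower-$k$ terms one obtains $\sum_{k=1}^{\pDimB} N(2\pDimB,k)\mu^k\leq (4\pDimB^2)^\pDimB\mu^\pDimB/\sqrt{2\pDimB}$; dividing by $\mu^{2\pDimB}$ yields the claim. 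The main obstacle is precisely this last combinatorial step: verifying the clean prefactor $(4\pDimB^2)^\pDimB/\sqrt{2\pDimB}$. It requires a careful Stirling estimate of $(2\pDimB-1)!!$ together with the observation that the ratios $\bigl(N(2\pDimB,k)/N(2\pDimB,\pDimB)\bigr)\cdot\mu^{k-\pDimB}$ decay geometrically in $\pDimB-k$ under $\mu\geq 1$, so the slack factor $(2\pDimB)^\pDimB$ hidden in $(4\pDimB^2)^\pDimB$ absorbs both the Stirling constant and the geometric sum over $k<\pDimB$. Everything preceding this step — Markov, the expansion, and the zero-expectation kill rule — is entirely routine.
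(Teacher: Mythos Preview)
Your proposal is correct and follows essentially the same route as the paper: Markov on the $2\pDimB$\th central moment, expand, kill tuples containing a singleton index via $\Ex{Z_i}=0$, and bound the surviving tuples combinatorially using $|\Ex{Z_i^j}|\leq p$. The only difference is in the bookkeeping of that last step. You group surviving tuples by their induced set partition of $\{1,\ldots,2\pDimB\}$ and then face the task---which you flag as the ``main obstacle''---of summing $N(2\pDimB,k)\mu^k$ and extracting the constant $(4\pDimB^2)^\pDimB/\sqrt{2\pDimB}$ via Stirling on $(2\pDimB-1)!!$ plus a geometric-decay argument over $k<\pDimB$. The paper instead encodes each surviving tuple by (i) which $\nu$ positions carry a first-occurrence index ($\binom{2\pDimB}{\nu}$ choices), (ii) the ordered $\nu$-tuple of distinct values ($\leq n^\nu$ choices), and (iii) for each of the remaining $2\pDimB-\nu$ positions, which of the $\nu$ already-seen values it repeats ($\nu^{2\pDimB-\nu}$ choices). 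This gives $\Ex{(Y-\mu)^{2\pDimB}}\leq\sum_{\nu=1}^{\pDimB}\binom{2\pDimB}{\nu}n^\nu\nu^{2\pDimB-\nu}p^\nu$, and now the constant falls out cleanly: use $(np)^\nu\leq(np)^\pDimB$ (from $\mu\geq 1$), $\nu\leq\pDimB$, and the central-binomial bound $\binom{2\pDimB}{\nu}\leq\binom{2\pDimB}{\pDimB}\leq 4^\pDimB/\sqrt{2\pDimB}$, so no Stirling or decay analysis is needed.
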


The following lemma testifies that this \PRG sequence, with good
probability, contains the desired basis (as such, conceptually, we can
think about it as being a permutation of $\IntRange{n}$).

\begin{lemma}
    \lemlab{p:r:g:failure}%
    Let $\Basis \subseteq \IntRange{n}$ be a specific set of $\cDim$
    numbers.  Fix an integer $\pDim \geq 8 + 2\cDim$, and let
    $\XS = \permut{ X_1, \ldots, X_{\constPRG n}}$ be a $\pDim$-wise
    independent random sequence of numbers, each uniformly distributed
    in $\IntRange{n}$, where $\constPRG$ is any constant
    $\geq \cLenValue$.  Then, the probability that the elements of
    $\Basis$ do not appear in $\XS$ is bounded by, say, $1/20$.
\end{lemma}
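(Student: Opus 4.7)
The plan is a union bound over $\Basis$, applying the $2\pDimB$-wise Chebyshev-type bound of \lemref{chebychev:k} to each individual element. Fix $b \in \Basis$, and for $i \in \IntRange{\constPRG n}$ let $Y_i$ be the indicator of the event $X_i = b$. Each $Y_i$ depends on $X_i$ alone, so the $\pDim$-wise independence of the $X_i$'s passes to the $Y_i$'s; together with $\Prob{X_i = b} = 1/n$, this gives $\mu := \Ex{\sum_i Y_i} = \constPRG$. The event ``$b \notin \XS$'' coincides with ``$\sum_i Y_i = 0$,'' which is precisely the setting of \lemref{chebychev:k}.

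I would then choose $\pDimB := \floor{(5 + \ceil{\ln \cDim})/2}$ and verify its two preconditions. The assumption $\pDim \geq 8 + 2\cDim$ easily yields $2\pDimB \leq \pDim$, giving the required $2\pDimB$-wise independence. The assumption $\constPRG \geq \cLenValue = 8(5+\ceil{\ln \cDim})^2$ yields $4\pDimB^2 \leq (5+\ceil{\ln \cDim})^2 \leq \mu/8$, so $4\pDimB^2/\mu \leq 1/8$. Plugging into the lemma gives $\Prob{b \notin \XS} \leq (1/8)^{\pDimB}/\sqrt{2\pDimB}$, and summing over $\Basis$,
\[
    \Prob{\Basis \not\subseteq \XS}
    \leq
    \cDim \cdot (1/8)^{\pDimB}/\sqrt{2\pDimB}.
\]
Since $\pDimB \geq (3 + \ln \cDim)/2$, the factor $(1/8)^{\pDimB}$ decays at least as fast as $\cDim^{-\ln 8/2} \leq \cDim^{-1.04}$, which handily beats the linear factor $\cDim$ from the union bound. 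A short numerical check then confirms that the right-hand side stays below $1/20$ uniformly in $\cDim \geq 1$.

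The only delicate aspect is tuning $\pDimB$. If $\pDimB$ is too small the exponent cannot overcome the $\cDim$ factor from the union bound; if $\pDimB$ is too large then $4\pDimB^2/\mu$ is no longer bounded away from $1$ and the tail estimate blows up. The form of $\cLenValue$ is calibrated precisely so that $\pDimB$ of order $(\ln \cDim)/2$ keeps both quantities in check, and the stated constant ``$1/20$'' is simply a convenient round number above the bound this yields.
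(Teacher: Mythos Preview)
Your proof is correct and follows essentially the same approach as the paper: fix $b\in\Basis$, apply \lemref{chebychev:k} to the indicators of $X_i=b$, and finish with a union bound over the $\cDim$ elements of $\Basis$. The only cosmetic difference is the tuning of the moment parameter --- the paper takes $\pDimB=k=5+\ceil{\ln\cDim}$ (so that $4\pDimB^2/\mu=1/2$ and one needs $\cDim\le 2^k\sqrt{2k}/20$), whereas you take $\pDimB\approx k/2$ (so that $4\pDimB^2/\mu\le 1/8$); both choices sit comfortably inside the constraint $2\pDimB\le\pDim$ and yield the $1/20$ bound.
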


\begin{proof}
    Fix an element $b$ of $\Basis$.  Let $Y_i$ be an indicator
    variable that is one $\iff$ $X_i = b$. Observe that
    $p = \Prob{Y_i=1} = 1/n$. As such, we have that
    \begin{math}
        \mu = \Ex{ \sum_{i=1}^{\constPRG n} Y_i } = \constPRG = 8 k^2,
    \end{math}
    for $k = 5 + \ceil{\ln \cDim}$.

    As $k \leq \pDim/2$, we can interpret $\XS$ as a $2k$-wise
    independent sequence.  By \lemref{chebychev:k}, we have that the
    probability that $b$ does not appear in $\XS$ is bounded by
    \begin{math}
        \Prob{\sum_i Y_i = 0 }%
        \leq%
        \pth{ 4k^2 / \mu }^{k}/ \sqrt{2k}.%
    \end{math}
    We want this probability to be smaller than $1/(20\cDim)$, Since
    $\mu =8k^2 $, this is equivalent to
    \begin{math}
        \pth{ 4k^2 / \mu }^{k} \cDim \leq {\sqrt{2k}}/{20}%
        \iff%
        \cDim \leq 2^k {\sqrt{2k}}/{20}%
    \end{math}
    which holds as $2^k \geq 32 \cDim$.

    Now, by the union bound, the probability that any element of
    $\Basis$ does not appear in $\XS$ is bounded by
    ${\cDim}/(20 {\cDim}) = {1}/{20}$.
\end{proof}%

\begin{remark:ext}
    \remlab{technicalities}%
    There are several low level technicalities that one needs to
    address in using such a \PRG sequence instead of a truly random
    permutation:
    \begin{compactenum}[(A)]
        \item \textsc{Repeated numbers are not a problem}: the
        algorithm \solveVS (see \figrefpage{V:S:algorithm}) ignores a
        constraint that is being inserted for the second time, since
        it cannot violate the current basis.
        
        \item \itemlab{adding:missing}%
        \textsc{Verifying the solution}: The sequence (of the indices)
        of the constraints used by the algorithm would be
        $X_1, \ldots, X_{\constPRG n}$. This sequence might miss some
        constraints that violate the computed solution.
        
        As such, in a second stage, the algorithm checks if any of the
        constraints $1,2,\ldots, n$ violates the basis computed.  If a
        violation was found, then the sequence generated failed, and
        the algorithm restarts from scratch -- resetting the \PRG used
        in this level, regenerating the random keys used to initialize
        it, and rerun it to generate a new sequence.
        \lemref{p:r:g:failure} testifies that the probability for such
        a failure is at most $1/20$ -- thus restarting is going to be
        rare.

        \item \textsc{Independence between levels}: We will use a
        different \PRG for each level of the recursion of
        \solveVS. Specifically, we generate the keys used in the \PRG
        in the beginning of each recursive call.  Since the depth of
        the recursion is $\cDim$, that increases the space requirement
        by a factor of $\cDim$.
        
        \item \textsc{If the subproblem size is not a prime}: In a
        recursive call, the number of constraints given (i.e., $m$)
        might not be a prime. To this end, the algorithm can store
        (non-uniformly\footnote{That is, sufficiently large sequence
           of such numbers have to be hard coded into the program, so
           that it can handle any size input the program is required
           to handle. This notion is more commonly used in the context
           of nonuniform circuit complexity.}), a list of primes, such
        that for any $m$, there is a prime $m'\geq m$ that is at most
        twice bigger than $m$\footnote{Namely, the program hard codes
           a list of such primes. The author wrote a program to
           compute such a list of primes, and used it to compute $50$
           primes that cover the range all the way to $10^{15}$ (the
           program runs in a few seconds). This would require hard
           coding $O(\log^2 n)$ bits storing these
           primes. Alternatively, one can generate such primes on the
           fly (and no hard coding of bits is required), but the
           details become more complicated, and this issue is somewhat
           orthogonal to the main trust of the paper.

           Nevertheless, here is a short outline of the idea:
           Legendre's conjectured that for any integer $x > 0$, there
           is a prime in the interval $[x, x +O(\sqrt{x})]$ (Cram\'er
           conjectured that this interval is of length $O(\log^2 x)$).
           As such, since one can test primality for a number $x$ in
           time polynomial in the number of bits encoding $x$, it
           follows that one can find a prime close to $2^i$ in time
           which is $o(2^i)$, which is all we need for our scheme.%
        }. Then the algorithm generates the sequence modulo $m'$, and
        ignores numbers that are larger than $m$. This implies that
        the sequence might contain invalid numbers, but such numbers
        are only a constant fraction of the sequence, so ignoring them
        does not change the running time analysis of our
        algorithm. (More precisely, this might cause the running time
        of the algorithm to deteriorate by a factor of
        $\exp(O(\cDim))$, but as we consider $\cDim$ to be a constant,
        this does not effect our analysis.)
    \end{compactenum}
\end{remark:ext}

One needs now to prove that backward analysis still works for our
algorithm for violator spaces. The proof of the following lemma is
implied by a careful tweaking of Mulmuley's analysis -- we provide the
details in \apndref{backward}.

\newcommand{\LemmaBAnalysisBody}{%
   Consider a violator space $\VSpace = (\CSet, \vFuncC)$ with
   $n =\cardin{\CSet}$, and combinatorial dimension $\cDim$. Let
   $\XS = X_1, X_2, \ldots, $ be a random sequence of constraints of
   $\CSet$ generated by a $\pDim$-wise independent distribution (with
   each $X_i$ having a uniform distribution), where
   $\pDim > 6\cDim + 9$ is a constant.  Then, for $i > 2\cDim$, the
   probability that $X_i$ violates
   $\Basis = \basisX{X_1, \ldots, X_{i-1}}$ is $O(1/i)$.  }

\begin{lemma}[See \apndref{backward}]%
       \lemlab{backward:a:works}%
    \LemmaBAnalysisBody
\end{lemma}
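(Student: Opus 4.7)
The plan is to adapt the classical backward-analysis bound from the uniform-random-permutation setting to the weaker $\pDim$-wise independent regime, following the template of Mulmuley. The target event $A_i$ is that $X_i$ violates $\Basis = \basisX{X_1,\ldots,X_{i-1}}$. In the permutation case one argues as follows: condition on the multiset $\{X_1,\ldots,X_i\}$; by exchangeability the value in the last slot is uniform over this multiset; since at most $\cDim$ of its elements lie in the basis, the conditional probability is at most $\cDim/i$. With only $\pDim$-wise independence, and with the sequence possibly having repetitions, this clean exchangeability is lost, so we must recover it in a quantitative form.

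I would begin by translating the target event into violator-space terms. By consistency, $X_i$ can violate $\Basis$ only if its value does not already appear among $X_1,\ldots,X_{i-1}$; by monotonicity, on the event that it does violate, $X_i$ must lie in $\basisX{\{X_1,\ldots,X_i\}}$. Thus
\[
    \Prob{A_i} \;=\; \sum_{B}\, \Prob{\, \basisX{\{X_1,\ldots,X_i\}} = B \text{ and } X_i \in B \,},
\]
where $B$ ranges over candidate bases of size at most $\cDim$. For each fixed $B$, the equation $\basisX{\{X_1,\ldots,X_i\}} = B$ is equivalent, by the Clarkson--Shor style axioms, to the two conditions that every element of $B$ appears among $X_1,\ldots,X_i$ and that no element of $\vFuncX{B}$ appears among them. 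Together with the requirement $X_i \in B$, the whole event depends on a short list of ``distinguished'' elements of $\CSet$, but the condition ``some $X_k$ equals a given element'' involves all $k \in \{1,\ldots,i\}$ and must be handled via inclusion--exclusion.

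The main technical step, which I expect to be the hardest part, is to show that the position within $\{1,\ldots,i\}$ of any basis element's first appearance is close to uniform even under $\pDim$-wise independence, so that the ``it is in slot $i$'' probability is $O(1/i)$ rather than concentrating. I would do this by truncating the standard inclusion--exclusion expansion for ``first appearance at slot $k$'' at depth roughly $\pDim/2$, so that every resulting expectation involves at most $\pDim$ coordinates and can therefore be computed as if the $X_k$ were fully independent; the truncation error I would bound using the moment inequality of \lemref{chebychev:k}. The hypothesis $\pDim > 6\cDim + 9$ is calibrated so that, after expanding the joint-event probability for each $B$ into a Boolean combination touching at most $\pDim$ coordinates simultaneously (accounting for the $|B|\le\cDim$ basis elements, the conflict-list indicators needed to pin down the basis, and the inclusion--exclusion depth for the first-appearance event), each expectation is exact and the residual error is negligible. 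Summing the $O(1/i)$ contribution over the at most $\cDim$ elements of $B$ and over bases $B$ then yields $\Prob{A_i} = O(\cDim/i) = O(1/i)$, with the hypothesis $i > 2\cDim$ guaranteeing that there are enough slots for the truncated inclusion--exclusion to capture the dominant terms and for each basis element to plausibly occupy the last slot.
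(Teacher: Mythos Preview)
Your decomposition into per-basis events is fine, and the observation that $X_i$ violating $\Basis_{i-1}$ forces $X_i \in \basisX{\{X_1,\ldots,X_i\}}$ is correct. The scheme you sketch even works cleanly in the fully i.i.d.\ case: the events ``basis${}=B$ and $b$ appears exactly once, at position $j$'' are equiprobable over $j\in\IntRange{i}$ by exchangeability and are disjoint, so each has probability at most $(1/i)\,\Prob{\text{basis}=B}$; summing over $b\in B$ and then over $B$ (the latter sum telescopes to $1$) gives $\cDim/i$. The gap is in transporting this to $\pDim$-wise independence. Disjointness of those events survives, but equiprobability does not, and your truncated inclusion--exclusion attacks only the marginal event ``$b$'s first appearance is at slot $i$'', not its intersection with ``basis${}=B$''. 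That intersection encodes, in particular, the condition that no element of $\vFuncX{B}$ appears among $X_1,\ldots,X_i$; the conflict list can have any size up to $n-\cDim$, and when it is large this is a severe global constraint on all $i$ coordinates. Nothing in your outline controls how that constraint interacts with the position of $b$, so there is no mechanism offered for why the per-$B$ inequality $\Prob{\text{basis}=B,\ X_i\in B} \le (C\cDim/i)\,\Prob{\text{basis}=B}$ should hold even approximately under limited independence.

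The paper avoids conditioning and exchangeability altogether. For each fixed basis $B$ it bounds the \emph{unconditional} probability of the joint event directly as $O\bigl((1/i)\,(i/n)^{\cDim}\, t^{-\pDimB}\bigr)$, where $t=\cardin{\vFuncX{B}}\big/(n/i)$ measures how heavy the conflict list is relative to $n/i$, and $\pDimB=(\pDim-\cDim-1)/2 -1$; the $t^{-\pDimB}$ decay is where \lemref{chebychev:k} enters, applied to the indicators $\mathbf{1}[X_j\in\vFuncX{B}]$ (this is the role of the moment bound, not an inclusion--exclusion truncation). It then stratifies bases by $t$ and invokes the Clarkson--Shor count (\lemref{count:bases}) that there are only $O\bigl((tn/i)^{\cDim}\bigr)$ bases with conflict list of size at most $tn/i$. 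The light bases ($t\le 1$) together contribute $O(1/i)$, and for each integer $t\ge 2$ the $t$-heavy bases contribute $O\bigl(1/(i\,t^{2})\bigr)$, which is summable in $t$. Both the stratification by conflict-list weight and the Clarkson--Shor counting are essential to make the sum over $B$ converge, and both are absent from your outline.
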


\begin{remark:unnumbered}
    \lemref{backward:a:works} states that backwards analysis works on
    pseudo-random sequences when applied to the Clarkson-Shor
    \cite{c-arscg-88, cs-arscg-89} settings, where the defining set of the event
    (or basis) of interest has constant size. However, there are cases
    when one would like to apply backwards analysis when the defining
    set might have an unbounded size, and the above does not hold in
    this case. 

    For an example where a defining set might have unbounded size,
    consider a random permutation of points, and the event being a
    point being a vertex of the convex-hull of the points inserted so
    far. This is used in a recent proof showing that the complexity of
    the convex-hull of $n$ points, picked uniformly and randomly, in
    the hypercube in $d$ dimensions has $O(\log^{d-1} n)$ vertices,
    with high probability \cite{chr-fpuvp-15},
\end{remark:unnumbered}

\subsubsection{The result}

\begin{theorem}
    \thmlab{v:s:c:space}
    Given an instance of a violator space $\VSpace = (\CSet, \vFuncC)$
    with $n$ constraints, and combinatorial dimension $\cDim$, one can
    compute $\basisX{\CSet}$ using $O\pth{\cDim^2 \log^2 \cDim }$
    space. For some constant $\constB = O\pth{ \cDim \log^2 \cDim}$,
    we have that:
    \begin{compactenum}[(A)]
        \item The expected number of basis computations is
        $O\pth{ \bigl. \smash{ \pth{ \constB \ln n }^\cDim} }$, each
        done over $O( \cDim^2 )$ constraints.
        
        \item The expected number of violation tests performed is
        $O\bigl( \constB^\cDim n \bigr)$.

        \item The expected running time (ignoring the time to do the
        above operations) is $O\bigl( \constB^\cDim n \bigr)$.
    \end{compactenum}
\end{theorem}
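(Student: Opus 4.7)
The plan is to run the algorithm \solveVS{} of \figref{V:S:algorithm}, replacing the random permutation at each recursion level by an on-the-fly \PRG sequence supplied by \lemref{p:r:g}, and then control both correctness and running time using \lemref{p:r:g:failure}, \lemref{backward:a:works}, and the analysis already carried out for \thmref{v:s:n:space}. Concretely, at the start of each recursive call on a subproblem with $m$ constraints, I will draw fresh random keys for a $\pDim$-wise independent \PRG, with $\pDim$ a constant multiple of $\cDim$ (big enough to feed both \lemref{p:r:g:failure} and \lemref{backward:a:works}), and use it to produce a sequence of length $\constPRG m$ over an interval of prime size $m'\in [m,2m]$ (discarding out-of-range values as in \remref{technicalities}). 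The loop over $i$ in \solveVS{} then reads index $X_i$ by evaluating the \PRG, rather than from a stored permutation; repeated indices are harmlessly skipped because they cannot violate the current basis.

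For the space bound, each recursive call stores only $O(\constPRG\pDim)=O(\cDim\log^2\cDim)$ words for its local \PRG keys, plus the witness set $\WSet$ and the current basis, all of size $O(\cDim^2)$. By \lemref{d:depth} the recursion depth is at most $\cDim$, so the total working space is $O(\cDim^2\log^2\cDim)$, matching the bound claimed. Independence between levels is obtained because each call generates its own keys, as in \remref{technicalities}.

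For correctness I will bolt on the verification step of \remref{technicalities}: after the main loop returns a candidate basis, the call scans all $m$ constraints once, testing each against the candidate; if any violates, it restarts from scratch with fresh keys. \lemref{p:r:g:failure}, applied to the basis $\basisX{\WSet\cup\CSetA}$ of the subproblem (which has size at most $\cDim$), says the basis elements appear in the \PRG sequence except with probability $\le 1/20$; when they do appear, the inductive correctness argument of \thmref{v:s:n:space} goes through unchanged (repeated indices aside), so the verification succeeds. Hence the expected number of restarts per call is $O(1)$, and each restart costs one scan of $m$ constraints, contributing only a constant factor to the overall work.

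For the time bound I will reuse the recurrences in the proof of \thmref{v:s:n:space} with two modifications. First, the loop length is $\constPRG m$ rather than $m$, which multiplies each level's cost by $\constPRG = O(\log^2\cDim)$. Second, and this is the step that really needs attention, the backward analysis estimate $\Ex{X_i}\le\cDim/i$ must hold for the \PRG sequence rather than a true permutation; this is exactly the content of \lemref{backward:a:works}, which applies since our \PRG is $\pDim$-wise independent with $\pDim>6\cDim+9$. With that in hand, the violation-count recurrence $V_{\cDim-j}(m)\le\constB m + \sum_i (\constB/i)V_{\cDim-(j-1)}(i)$ (where $\constB=O(\cDim\log^2\cDim)$ absorbs both the $\constPRG$ blow-up and the $O(1)$ restart overhead) telescopes to $V_0(m)=O(\constB^\cDim m)$, and the basis-computation recurrence telescopes to $C_0(m)=O((\constB\ln m)^\cDim)$. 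The main obstacle is precisely the backward-analysis step: classical backward analysis assumes a uniformly random permutation, and the tight $O(1/i)$ bound on a $\pDim$-wise independent pseudo-random sequence does not follow from a naive union bound; it is delivered by \lemref{backward:a:works}, whose proof in \apndref{backward} relies on the Clarkson--Shor style bounded-defining-set structure of bases in a violator space.
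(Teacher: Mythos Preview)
Your proposal is correct and follows essentially the same route as the paper: run \solveVS{} with a fresh $\pDim$-wise independent \PRG at each recursion level, use \lemref{p:r:g:failure} to bound the restart probability, and plug the $O(1/i)$ bound of \lemref{backward:a:works} into the recurrences from the proof of \thmref{v:s:n:space}. You have in fact spelled out more of the details (the space accounting, the absorption of the $\constPRG$ and restart factors into $\constB$) than the paper's own proof, which is quite terse.
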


\begin{proof}
    The algorithm is described above. As for the analysis, it follows
    by plugging the bound of \lemref{backward:a:works} into the proof
    of \thmref{v:s:n:space}.

    The only non-trivial technicality is to handle the case that the
    \PRG sequence fails to contain the basis. Formally, abusing
    notations somewhat, consider a recursive call on the constraints
    indexed by $\IntRange{n}$, and let $\Basis$ be the desired basis
    of the given subproblem.  By \lemref{p:r:g:failure}, the
    probability that $\Basis$ is not contained in the generated \PRG
    is bounded by $1/20$ -- and in such a case the sequence has to be
    regenerated till success.  As such, in expectation, this has a
    penalty factor of (say) $2$ on the running time in each
    level. Overall, the analysis holds with the constants
    deteriorating by a factor of (at most) $ 2^\cDim$.
\end{proof}

\begin{remark}
    Note, that the above pseudo-random generator technique is well
    known, but using it for linear programming by itself does not make
    too much sense. Indeed, pseudo-random generators are sometimes
    used as a way to reduce the randomness consumed by an
    algorithm. That in turn is used to derandomize the
    algorithm. However, for linear programming Megiddo's original
    algorithm was already linear time deterministic. Furthermore,
    Chazelle and \Matousek \cite{cm-ltdao-96}, using different
    techniques showed that one can even derandomize Clarkson's
    algorithm and get a linear running time with a better constant.

    Similarly, using \PRG{}s to reduce space of algorithms is by now a
    standard technique in streaming, see for example the work by Indyk
    \cite{i-sdpge-06}, and references therein.
\end{remark}



\section{Corridor decomposition}
\seclab{corridor:decomp}

\begin{figure}[t]
    \begin{center}
        \begin{tabular}{c%
          cc}
          \includegraphics[page=2,width=0.3\linewidth]{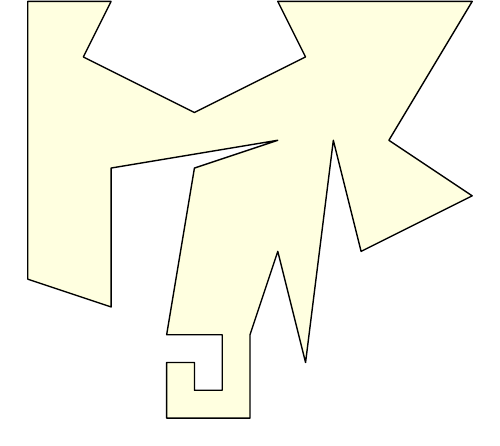}%
          &
            \includegraphics[page=4,width=0.3\linewidth]{figs/medial_axis_new}%
          &%
            \includegraphics[page=5,width=0.3\linewidth]%
            {figs/medial_axis_new}\\[-0.1cm]
          (i) & (ii) & (iii) 
        \end{tabular}
        %
    \end{center}

    \vspace{-0.5cm}

    \caption{An example of a corridor decomposition for a polygon: %
       (i) Input curves, %
       medial axis and active vertices, %
       (ii) their critical circles, %
       and their spokes, and %
       (iii) the resulting corridor decomposition.%
    }
    \figlab{corridor:decomposition:polygon}
\end{figure}

\subsection{Construction}

The decomposition here is similar to the decomposition described by
the author in a recent work \cite{h-qssp-14}.

\begin{defn}[Breaking a polygon into curves]
    \deflab{breakup}%
    Let the given polygon $\Polygon$ have the vertices
    $\vertex_1, \ldots \vertex_n$ in counterclockwise order along its
    boundary. Let $\curve_i$ be the polygonal curve having the
    vertices
    \begin{math}
        \vertex_{(i-1)\Space + 1}, \vertex_{(i-1)\Space+2}, %
        \linebreak[1] \vertex_{(i-1)\Space+3}, \ldots,
        \vertex_{i\Space + 1},
    \end{math}
    for $i =1, \ldots, \nA-1$, where
    \begin{align*}
        \nA = \floor{ (n -1) / \Space}+1.
    \end{align*}
    The last polygonal curve is
    $\sigma_{\nA} = \vertex_{(\nA-1)\Space + 1},
    \vertex_{(\nA-1)\Space + 2}, \ldots \vertex_n, \vertex_1$.
    Note, that given $\Polygon$ in a read only memory, one can encode
    any curve $\curve_i$ using $O(1)$ space. Let
    $\CurveSet = \brc{ \curve_1, \ldots, \curve_{\nA}}$ be the
    resulting set of polygonal curves. From this point on, a
    \emphi{curve} refers to a polygonal curve generated by this
    process.
\end{defn}

\subsubsection{Corridor decomposition for the whole polygon} %
Next, consider the medial axis of $\Polygon$ restricted to the
interior of $\Polygon$ (see Chin \etal~\cite{csw-fmasp-99} for the
definition and an algorithm for computing the medial axis of a
polygon).  A vertex $\vertex$ of the medial axis corresponds to a disk
$\disk$, that touches the boundary of $\Polygon$ in two or three
points (by the general position assumption, that the medial axis of
the polygon does not have degenerate vertices, not in any larger
number of points). The medial axis has the topological structure of a
tree.

To make things somewhat cleaner, we pretend that there is a little
hole centered at every vertex of the polygon if it is the common
endpoint of two curves. This results in a medial axis edge that comes
out of the vertex as an angle bisector, both for an acute angle (where
a medial axis edge already exists), and for obtuse angles, see
\figref{corridor:decomposition:polygon} and
\figref{corridor:decomposition:polygon:2}.

\parpic[r]{%
   \begin{minipage}{5cm}%
       \includegraphics[width=0.98\linewidth]{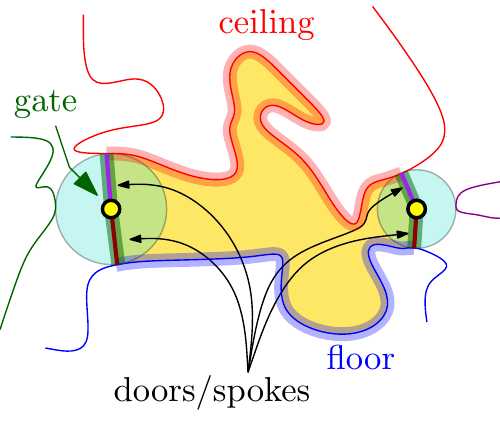}%
       \captionof{figure}{A corridor and its doors.}  \figlab{doors}%
   \end{minipage}%
}%

A vertex of the medial axis is \emphi{active} if its disk touches
three different curves of $\CurveSet$. It is easy to verify that there
are $O( \nA )$ active vertices. The segments connecting an active
vertex to the three (or two) points of tangency of its empty disk with
the boundary of $\Polygon$ are its \emphi{spokes} or
\emphi{doors}. Introducing these spokes breaks the polygon into the
desired \emphi{corridors}.

Such a corridor is depicted on the right. It has a \emphi{floor} and a
\emphi{ceiling} -- each is a subcurve of some input curve. In
addition, each corridor might have four additional doors grouped into
``double'' doors. Such a double door is defined by an active vertex
and two segments emanating from it to the floor and ceiling curves,
respectively. We refer to such a double door as an \emphi{gate}.  As
such, specifying a single corridor requires encoding the vertices of
the two gates and the floor and ceiling curves, which can be done in
$O(1)$ space, given the two original curves. In particular, a corridor
is defined by four curves -- a ceiling curve, floor curve, and two
additional curves inducing the two gates.

Here is an informal argument why corridors indeed have the above
described structure. If there is any other curve, except the floor and
the ceiling, that interacts with the interior of the corridor, then
this curve would break the corridor into smaller sub-corridors, as
there would be a medial axis vertex corresponding to three curves in
the interior of the corridor, which is impossible.

\begin{figure}[t]
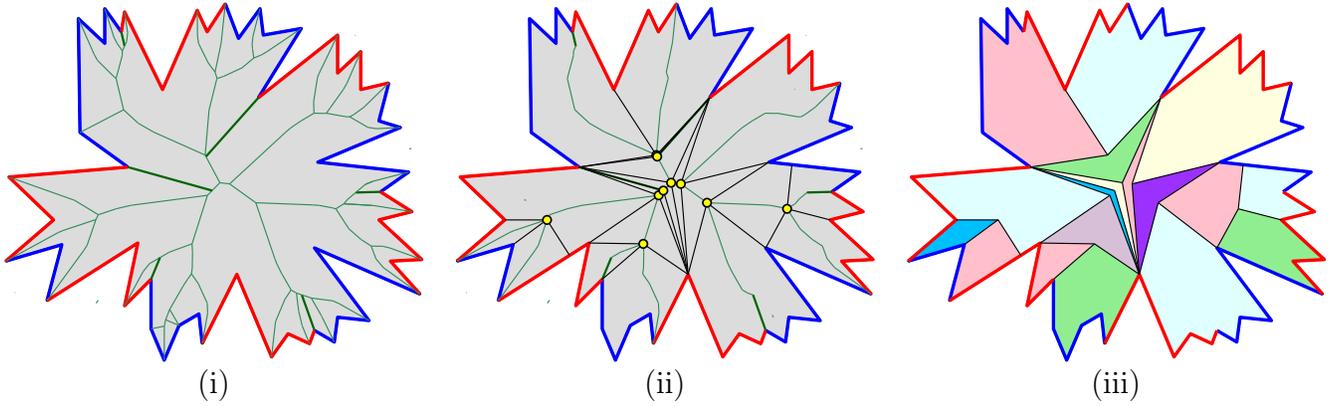

    \begin{center}
        \begin{tabular}{c%
               cc}
          \includegraphics[page=3,width=0.3\linewidth]%
          {\si{figs/corridor_decomp}}%
          &
            \includegraphics[page=5,,width=0.3\linewidth]%
            {\si{figs/corridor_decomp}}%
          &
            \includegraphics[page=7,,width=0.3\linewidth]%
            {\si{figs/corridor_decomp}}\\[-0.1cm]
          (i) & (ii) & (iii) 
        \end{tabular}
    \end{center}

    \vspace{-0.5cm}

    \caption{Another example of a corridor decomposition for a
       polygon: %
       (i) Input polygon and its curves and its medial axis (the thick
       lines are the angle bisectors for the obtuse angles where two
       curves meet), %
       (ii) active vertices and their spokes (with a reduced medial
       axis), and %
       (iii) the resulting corridor decomposition.%
    }
    \figlab{corridor:decomposition:polygon:2}
\end{figure}

\begin{figure}[t]
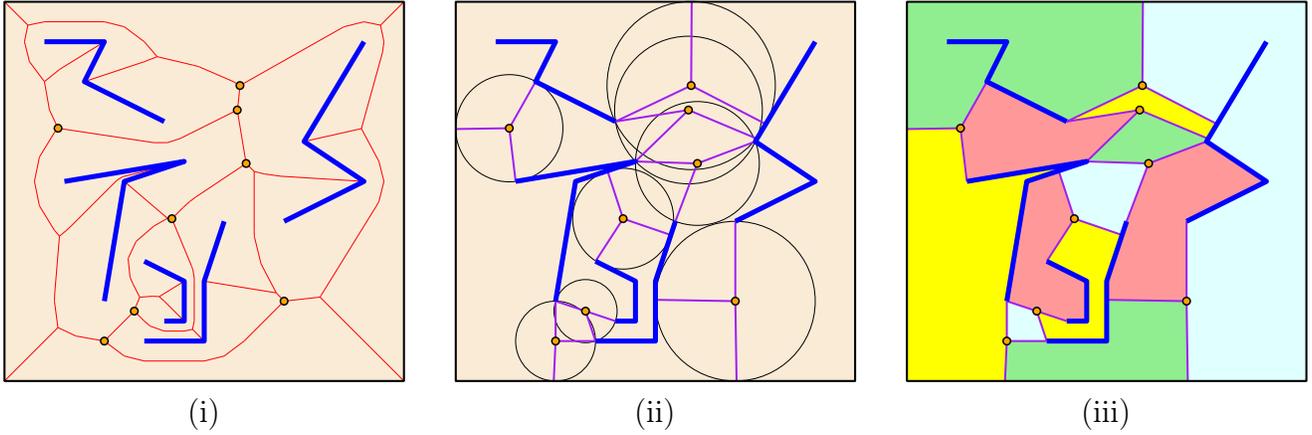

    \begin{center}
        \begin{tabular}{c%
               cc}
          \includegraphics[page=2,width=0.3\linewidth]%
          {figs/medial_axis_curves}%
          &
            \includegraphics[page=4,width=0.3\linewidth]%
            {figs/medial_axis_curves}%
            &
              \includegraphics[page=5,,width=0.3\linewidth]%
              {figs/medial_axis_curves}\\[-0.1cm]
            (i) & (ii) & (iii) 
        \end{tabular}
    \end{center}

    \vspace{-0.5cm}

    \caption{Corridor decomposition for disjoint curves: %
       (i) Input curves, the medial axis, and active vertices, %
       (ii) the critical circles, and their spokes, and %
       (iii) the resulting corridor decomposition.%
    }
    \figlab{corridor:decomposition}
\end{figure}

\subsubsection{Corridor decomposition for a subset of the curves} %
For a subset $\CurveSetA \subseteq \CurveSet$, of total complexity
$\nB$, one can apply a similar construction. Again, compute the medial
axis of the curves of $\CurveSetA$, by computing, in
$O( \nB \log \nB)$ time, the Voronoi diagram of the segments used by
the curves \cite{f-savd-87}, and extracting the medial axis (it is now
a planar graph instead of a tree). Now, introducing the spokes of the
active vertices, results in a decomposition into corridors. For
technical reasons, it is convenient to add a large bounding box, and
restrict the construction to this domain, treating this \emphi{frame}
as yet another input curve.  \figref{corridor:decomposition} depicts
one such corridor decomposition.

Let $\CD{\CurveSetA}$ denote this resulting decomposition into
corridors.

\subsubsection{Properties of the resulting decomposition}

Every corridor in the resulting decomposition $\CD{\CurveSetA}$ is
defined by a constant number of input curves. Specifically, consider
the set of all possible corridors; that is
\begin{math}
    \AllCorridors = \bigcup_{\CurveSetB \subseteq \CurveSet}
    \CD{\CurveSetB}.
\end{math}
Next, consider any corridor $\Corridor \in \AllCorridors$, then there
is a unique \emphi{defining set}
$\DefSet{\Corridor} \subseteq \CurveSetA$ (of at most $4$
curves). Similarly, such a corridor has \emphi{stopping set} (or
\emphi{conflict list}) of $\Corridor$, denoted by
$\KillSet{\Corridor}$.  Here, the defining set of a corridor, as
depicted by \figref{doors}, is formed by the floor and the ceiling
curves of the corridor, and the two external curves defining the two
gates. The stopping set contains all the curves that intersect the
interior of the corridor, or the interior of the two disks that define
the two gates.

Consider any subset $\Sample \subseteq \CurveSet$.  It is easy to
verify that the following two conditions hold: %
\smallskip%
\begin{compactenum}[{\qquad\qquad}(i)]
    \item 
    For any $C \in \CD{\Sample}$, we have $\DefSet{C} \subseteq
    \Sample$ and $\Sample \cap \KillSet{C} = \emptyset$.
    
    \item 
    If $\DefSet{C} \subseteq \Sample$ and $\KillSet{ C }\cap
    \Sample=\emptyset$, then $C \in \CD{\Sample}$.
\end{compactenum}
\smallskip%
Namely, the corridor decomposition complies with the technique of
Clarkson-Shor (see \secref{p:l:violator:spaces} and
\defref{c:s:framework}).




\subsection{Computing a specific corridor}

Let $\pnt$ be a point in the plane, and let $\CurveSet$ be a set of
$\nA$ interior disjoint curves (stored in a read only memory), where
each curve is of complexity $\Space$. Let $n$ be the total complexity
of these curves (we assume that $n = \Theta(\Space \, \nA)$). Our
purpose here is to compute the corridor $\Corridor \in \CD{\CurveSet}$
that contains $\pnt$.  Formally, for a subset
$\CurveSetA \subseteq \CurveSet$, we define the function
$w(\CurveSetA)$, to be the defining set of the corridor
$\Corridor \in \CD{\CurveSetA}$ that contains $\pnt$. Note, that such
a defining set has cardinality at most $\cDim=4$.

\subsubsection{Basic operations} %
 We need to
specify how to implement the two basic operations:
\begin{compactenum}[\qquad(A)]
    \item (\textbf{Basis computation}) Given a set of $O(1)$ curves,
    we compute their medial axis, and extract the corridor containing
    $\pnt$. This takes $O( \Space \log \Space)$ time, and uses
    $O(\Space)$ work space.

    \item (\textbf{Violation test}) Given a corridor $\Corridor$, and
    a curve $\curve$, both of complexity $O(\Space)$, we can check if
    $\curve$ violates the corridor by checking if an arbitrary vertex
    of $\curve$ is contained in $\Corridor$ (this takes $O(\Space)$
    time to check), and then check in $O(\Space)$ time, if any segment
    of $\curve$ intersects the two gates of $\Corridor$. More
    precisely, each gate has an associated empty disk with it. The
    corridor is violated if the curve $\curve$ intersect these two
    empty disks associated with the two gates of the corridor,
    see \figref{doors}.  Overall, this takes $O(\Space)$ time.
\end{compactenum}

\smallskip%
\noindent%
We thus got the following result.

\begin{lemma}
    \lemlab{compute:corridor}%
    Given a polygon $\Polygon$ with $n$ vertices, stored in read only
    memory, and let $\Space$ be a parameter. Let $\CurveSet$ be the
    set of $\nA$ curves resulting from breaking $\Polygon$ into
    polygonal curves each with $\Space$ vertices, as described in
    \defref{breakup}. Then, given a query point $\pnt$ inside
    $\Polygon$, one can compute, in
    $O( n + \Space \log \Space \log^4 (n/\Space) )$ expected time, the
    corridor of $\CD{\CurveSet}$ that contains $\pnt$. This algorithm
    uses $O( \Space )$ work space.
\end{lemma}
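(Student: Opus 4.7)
The plan is to recast the problem as point location in a Clarkson--Shor style canonical decomposition, and then invoke \thmref{v:s:c:space} with $\cDim = 4$. Specifically, take as the set of constraints the collection $\CurveSet = \brc{\curve_1, \ldots, \curve_{\nA}}$ from \defref{breakup}, and for each subset $\CurveSetA \subseteq \CurveSet$ let $\cellX{\CurveSetA}$ be the corridor of $\CD{\CurveSetA}$ that contains the fixed query point $\pnt$, with defining set $w(\CurveSetA)$ and conflict list equal to the stopping set of that corridor. The subsection on properties of the decomposition already verifies the two Clarkson--Shor axioms for $\CD{\cdot}$, so by the lemma relating point location to violator spaces this is a violator space of combinatorial dimension $\cDim = 4$ on $\nA$ constraints.

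Next, I would provide the two subroutines \compBasis and \algVio required by \solveVS{}, exactly as described in the ``Basic operations'' paragraph: given $O(1)$ curves each of complexity $\Space$, the medial axis of the corresponding segments is built in $O(\Space \log \Space)$ time and $O(\Space)$ work space using a Voronoi diagram computation, from which the corridor containing $\pnt$ is extracted; and a violation test is performed in $O(\Space)$ time by checking whether an endpoint of the candidate curve lies in the current corridor and whether any segment of the curve meets the two empty disks associated with the corridor's gates. Note that the curves themselves need only be stored by their pair of starting and ending indices into $\Polygon$, so each curve occupies $O(1)$ cells of work memory.

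The running time then follows by plugging the subroutine costs into \thmref{v:s:c:space} with $\cDim = 4$. The expected number of violation tests is $O(\constB^\cDim \nA) = O(\nA)$, contributing $O(\nA \cdot \Space) = O(n)$ time in total; the expected number of basis computations is $O\pth{(\constB \ln \nA)^\cDim} = O\pth{\log^4(n/\Space)}$, contributing $O\pth{\Space \log \Space \cdot \log^4(n/\Space)}$ time. Summing gives the claimed expected running time $O\pth{n + \Space \log \Space \log^4(n/\Space)}$. For the space bound, the witness set $\WSet$ carried by \solveVS{} has at most $O(\cDim^2) = O(1)$ curves, stored as index pairs; the \PRG state per recursion level is $O(\cDim^2 \log^2 \cDim) = O(1)$, multiplied by the recursion depth $O(\cDim) = O(1)$; and each basis computation uses $O(\Space)$ scratch space but releases it when it returns. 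The dominant term is therefore $O(\Space)$.

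The main technical point to verify carefully is that the hand-off between the abstract algorithm and the $O(\Space)$-sized geometric objects does not blow up the space budget: since a curve is represented by two integers (its starting and ending vertex indices in the read-only polygon $\Polygon$), copying, comparing, and inserting curves into $\WSet$ and into the \PRG sequence only manipulates $O(1)$ words per curve, while the actual $\Space$-sized geometry is touched only inside the two subroutines, where the $O(\Space)$ scratch space is allocated, used, and freed. Combined with the ``verification'' pass of \remref{technicalities}\itemref{adding:missing}, which performs one final $O(\nA)$ sweep of violation tests costing $O(n)$ time, this yields the lemma.
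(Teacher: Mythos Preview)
Your proposal is correct and follows essentially the same route as the paper: identify the corridor decomposition as a Clarkson--Shor canonical decomposition (hence a violator space of combinatorial dimension $\cDim=4$ on $\nA$ constraints), implement \compBasis and \algVio as described, and plug the $O(\Space\log\Space)$ and $O(\Space)$ per-operation costs into \thmref{v:s:c:space}. The paper's own proof is terser on the space accounting, but your more explicit treatment of why curves and corridors are $O(1)$-word objects while only the basis computation needs $O(\Space)$ scratch is exactly the same reasoning.
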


\begin{proof}
    As described above, this problem is a point location problem in a
    canonical decomposition; that is, a violator space problem.
    Plugging it into \thmref{v:s:c:space}, makes in expectation
    $O(\nA)$ violation tests, each one takes $O(\Space)$ time, where
    $\nA = \floor{ (n -1) / \Space}+1 = O(n/\Space)$. The algorithm
    performs in expectation $O( \log^4 \nA)$ basis computations, each
    one takes $O(\Space \log \Space)$ time.  Thus, expected running
    time is $O( \nA \cdot \Space + \Space \log \Space \log^4 \nA ) $
    time.  As for the space required, observe that a corridor can be
    described using $O(1)$ space. 

    The algorithm needs $O(\Space)$ work space to implement the basis
    computation operation, as all other portions of the algorithm
    require only constant additional work space.
\end{proof}%

We comment that the problem of \lemref{compute:corridor} can also be
solved as an \LP-type problem. We refer the interested reader to
\apndref{l:p:type}.%

\section{Shortest path in a polygon in sublinear %
   space}
\seclab{shortest:path}

Let $\Polygon$ be a simple polygon with $n$ edges in the plane, and
let $\source$ and $\target$ be two points in $\Polygon$, where $s$ is
the \emphi{source}, and $t$ is the \emphi{target}.  Our purpose here
is to compute the shortest path between $s$ and $t$ inside $\Polygon$.
The vertices of $\Polygon$ are stored in (say) counterclockwise order
in an array stored in a read only memory.  Let $\Space$ be a
prespecified parameter that is (roughly) the amount of additional
space available for the algorithm.

\begin{figure}[t]
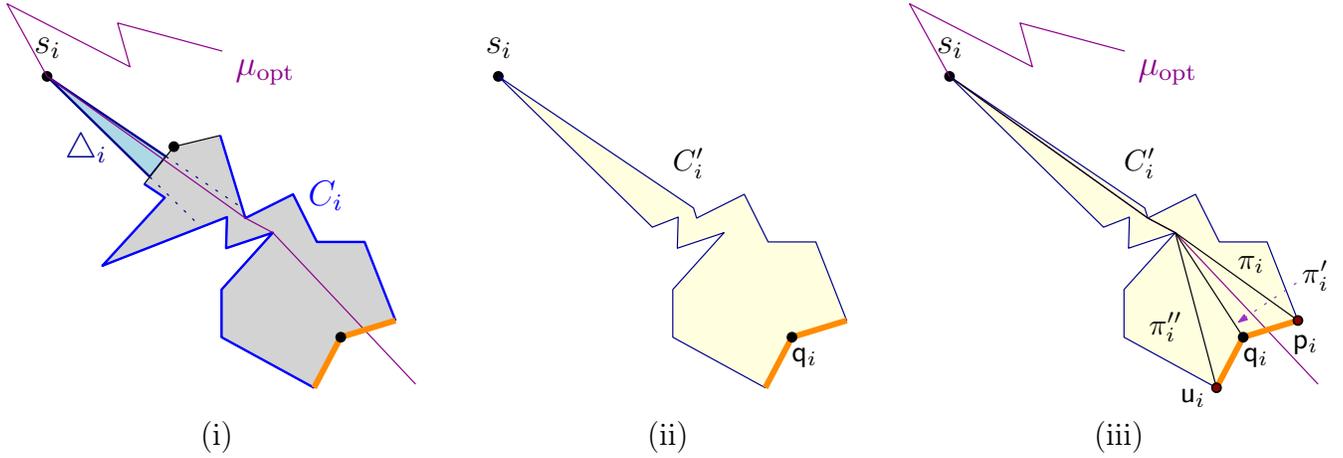

    \centerline{%
       \begin{tabular}{cc%
              c}
           \includegraphics[page=1,width=0.3\linewidth]%
         {\si{figs/ith_iteration}}%
         &%
           \includegraphics[page=2,width=0.3\linewidth]%
           {\si{figs/ith_iteration}}%
           &%
           \includegraphics[page=3,width=0.3\linewidth]%
             {\si{figs/ith_iteration}}%
         \\%
         (i) & (ii) & (iii)
       \end{tabular}%
    }
    \caption{(i) The state in the beginning of the $i$\th iteration. %
       (ii) The clipped polygon $\Corridor_i'$. %
       (iii) The funnel created by the shortest paths from $\source_i$
       to the two spoke endpoints.%
    }
    \figlab{clip:clap}
\end{figure}

\subsection{Updating the shortest path through a corridor}

We remind the reader that a corridor has two double \emph{doors}, see
\figrefpage{doors}. The rest of the boundary of the corridor is made
out two chains from the original polygon.

Given two rays $\ray$ and $\rayA$, that share their source vertex
$\vertex$ (which lies inside $\Polygon$), consider the polygon
$\PolygonA$ that starts at $v$, follows the ray $\ray$ till it hits
the boundary of $\Polygon$, then trace the boundary of $\Polygon$ in a
counterclockwise direction till the first intersection of $\rayA$ with
$\partial \Polygon$, and then back to $\vertex$. The polygon
$\PolygonA = \ClipPoly{\Polygon}{\ray}{\rayA}$ is the \emphi{clipped}
polygon.  See \figref{clip:clap} (ii).

A \emphi{geodesic} is the shortest path between two points (restricted
to lie inside $\Polygon$). Two geodesics might have a common
intersection, but they can cross at most once. Locally, inside a
polygon, a geodesic is a straight segment.  For our algorithm, we need
the following two basic operations (both can be implemented using
$O(1)$ work space): %
\smallskip%
\begin{compactenum}[\quad(A)]
    \item \AlgIsIn{}($\pnt$): Given a query point $\pnt$, it decides
    if $\pnt$ is inside $\Polygon$. This is done by scanning the edges
    of $\Polygon$ one by one, and counting how many edges cross the
    vertical ray shooting from $\pnt$ downward. This operation takes
    linear time (in the number of vertices of $\Polygon$). If the
    count is odd then $\pnt$ is inside $\Polygon$, and it is outside
    otherwise\footnote{This is the definition of inside/outside of a
       polygon by the Jordan curve theorem.}.

    \item \AlgIsInSubPoly{}($\pnt, \ray, \rayA$): returns true if
    $\pnt$ is in the clipped polygon
    $\ClipPoly{\Polygon}{\ray}{\rayA}$.  This can be implemented to
    work in linear time and constant space, by straightforward
    modification of \AlgIsIn{}.

    \SaveEnumCounter{}
\end{compactenum}
\medskip

Using vertical and horizontal rays shot from $\source$, one can
decide, in $O(n)$ time, which quadrant around $\source$ is locally
used by the shortest path from $\source$ to $\target$. Assume that
this path is in the positive quadrant. It would be useful to think
about geodesics starting at $\source$ as being sorted
angularly. Specifically, if $\Path$ and $\PathA$ are two geodesic
starting at $\source$, then $\Path$ is to the \emphi{left} of
$\PathA$, if the first edge of
$\Path$ is counterclockwise to the first edge of $\PathA$. If the
prefix of $\Path$ and $\PathA$ is non-empty, we apply the same test to
the last common point of the two paths. Let $\Path \prec \PathA$
denote that $\Path$ is to the left of $\PathA$.

 In particular, if the endpoint of the rays $\ray, \rayA$ is the
source vertex $\source$, and the geodesic between $\source$ and
$\target$ lies in $\ClipPoly{\Polygon}{\ray}{\rayA}$, then given a
third ray $\rayB$ lying between $\ray$ and $\rayA$, the shortest path
between $\source$ and $\target$ in $\Polygon$ must lie completely
either in $\ClipPoly{\Polygon}{\ray}{\rayB}$ or
$\ClipPoly{\Polygon}{\rayB}{\rayA}$, and this can be tested by a
single call to \AlgIsInSubPoly{} for checking if $\target$ is in
$\ClipPoly{\Polygon}{\rayB}{\rayA}$.


\subsubsection{Limiting the search space}

\begin{lemma}%
    \lemlab{c:c:w}%
    Let $\Polygon$, $\source$ and $\target$ be as above, and $\SPath$
    be the shortest path from $\source$ to $\target$ in $\Polygon$.
    Let $\pnt \pntA$ be a given edge, which is the last edge in the
    shortest path $\Path$ from $\source$ to $\pntA$, where $\pntA$ is
    in $\Polygon$.  Then, one can decide in $O(n)$ time, and using
    $O(1)$ space, if $\SPath \prec \Path$, where $n$ is the number of
    vertices of ${\Polygon}$.
\end{lemma}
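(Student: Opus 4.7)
The plan is to reduce the comparison $\SPath \prec \Path$ to a point-location test in a sub-polygon of $\Polygon$ obtained by extending the last edge $\pnt\pntA$ of $\Path$ to a full chord of $\Polygon$.

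First, I would extend the segment $\pnt\pntA$ along its supporting line in both directions inside $\Polygon$. Let $\ray$ be the ray from $\pntA$ in direction $\pntA-\pnt$, and let $\rayA$ be the opposite ray from $\pntA$ in direction $\pnt-\pntA$. Scanning every edge of $\Polygon$ once and retaining the closest intersection with each ray yields, in $O(n)$ time and $O(1)$ space, the exit points $\pntC \in \ray \cap \partial\Polygon$ and $\pntD \in \rayA \cap \partial\Polygon$ (if the backward extension leaves $\Polygon$ at the reflex vertex $\pnt$ itself, set $\pntD=\pnt$). The resulting segment $\pntD\pntC$ passes through $\pnt$ and $\pntA$, lies inside $\Polygon$, and is a chord splitting $\Polygon$ into two sub-polygons; one of them is exactly $\ClipPoly{\Polygon}{\ray}{\rayA}$ in the notation of the primitive described before the lemma.

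Next, I would show that the prefix of $\Path$ from $\source$ to $\pnt$ lies entirely on one side of this chord: any crossing of the chord by the prefix would allow a shortcut along the chord (which is contained in $\Polygon$), contradicting that $\Path$ is taut. Combined with the taut-string condition at the reflex vertex $\pnt$ together with the outgoing direction $\pnt\pntA$, this uniquely identifies which of the two sub-polygons is the ``left-of-$\Path$'' side; call it $\Polygon_L$. In the degenerate case $\pnt = \source$ (where $\Path$ is a single edge), the identification of $\Polygon_L$ reduces to a local orientation test at $\source$. By the angular ordering of geodesics emanating from $\source$, the relation $\SPath \prec \Path$ holds iff $\target \in \Polygon_L$. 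The test is then a single invocation of \AlgIsInSubPoly{}$(\target, \ray, \rayA)$ (possibly with $\ray$ and $\rayA$ swapped according to the orientation determined above), which by the discussion preceding the lemma runs in $O(n)$ time and $O(1)$ space, meeting the claimed bounds.

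The main obstacle is justifying the correspondence between the two sides of the chord and the relation $\prec$. The non-crossing of $\Path$ with the chord (and its extension to $\pntC$) follows cleanly from the taut-string property, but matching the resulting left/right partition with the angular order of geodesics from $\source$ requires a careful local analysis at the reflex vertex $\pnt$, using its two incident polygon edges together with the direction of $\pnt\pntA$ to pin down which side corresponds to ``counterclockwise of $\Path$'' as seen from $\source$; a monotonicity argument then lifts this local identification to all geodesics out of $\source$.
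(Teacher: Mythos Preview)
Your approach is essentially the same as the paper's: extend the last edge $\pnt\pntA$ to a chord of $\Polygon$, and test on which side $\target$ lies via a single \AlgIsInSubPoly{} call. The one simplification you miss is that when $\pnt\neq\source$, the point $\pnt$ is necessarily a vertex of $\Polygon$ (interior vertices of a geodesic are reflex polygon vertices), so $\pnt$ already sits on $\partial\Polygon$; hence only the \emph{forward} ray from $\pnt$ in direction $\pnt\to\pntA$ is needed, and its first boundary hit $\pntB$ gives a diagonal $\pnt\pntB$ splitting $\Polygon$ in two. Your backward ray from $\pntA$ through $\pnt$ hits $\partial\Polygon$ at $\pnt$ itself, so your chord $\pntD\pntC$ degenerates to the paper's $\pnt\pntB$ anyway; recognizing this up front removes the case analysis around $\pntD$. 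The paper's proof is terser than yours about why the side containing $\target$ determines $\SPath\prec\Path$, so your extra justification (prefix of $\Path$ cannot cross the chord, local orientation at $\pnt$) is not wasted.
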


\begin{proof}
    If $\pnt = \source$, then this can be determined by shooting two
    rays, one in the direction of $\source \rightarrow \pntA$ and the
    other in the opposite direction, where
    $ \source \rightarrow \pntA$ denotes the vector $\pntA - \source$.
    Then a single call to \AlgIsInSubPoly{} resolves the issue.

    Otherwise, $\pnt$ must be a vertex of $\Polygon$. Consider the ray
    $\ray$ starting at $\pnt$ (or $\pntA$ -- that does not matter) in
    the direction of $\pnt \rightarrow \pntA$. Compute, in linear
    time, the first intersection of this ray with the boundary of
    $\Polygon$, and let $\pntB$ be this point. Clearly, $\pnt \pntB$
    connects two points on the boundary of $\Polygon$, and it splits
    $\Polygon$ into two polygons. One can now determine, in linear
    time, which of these two polygons contains $\target$, thus
    resolving the issue.
\end{proof}%

\subsubsection{Walking through a corridor}

In the beginning of the $i$\th iteration of the algorithm it would
maintains the following quantities (depicted in \figref{clip:clap}
(i)):
\begin{compactenum}[\quad\quad(A)]
    \item $\source_i$: the current source (it lies on the shortest
    path $\sopt$ between $\source$ and $\target$).
    
    \item $\Corridor_i$: The current corridor.
    \item $\triangle_i$: A triangle having $\source_i$ as one of its
    vertices, and its two other vertices lie on a spoke of
    $\Corridor_i$. The shortest path $\sopt$ passes through
    $\source_i$, and enters $\Corridor_i$ through the base of
    $\triangle_i$, and then exits the corridor through one of its
    ``exit'' spokes.
\end{compactenum}

\smallskip The task at hand is to trace the shortest path through
$\Corridor_i$, in order to compute where the shortest path leaves the
corridor.

\begin{lemma}
    \lemlab{step}%
    Tracing the shortest path $\sopt$ through a single corridor takes
    $O\bigl( n \log \Space + \Space \log \Space \log^4 \nA \bigr)$
    expected time, using $O(\Space)$ space.
\end{lemma}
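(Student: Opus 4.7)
The plan is to process the corridor in four phases that together yield the claimed bound. \textbf{(i) Extending the funnel.} Starting from the funnel carried in from the previous iteration---apex $\source_i$, two concave chains of at most $O(\Space)$ vertices ending at the two endpoints of the entry spoke of $\Corridor_i$---I feed in the floor and ceiling vertices of $\Corridor_i$ (read directly from the read-only polygon via constant-space pointers) and update the funnel using the standard Lee--Preparata rule. Since $\Corridor_i$ contributes $O(\Space)$ new vertices, this phase takes $O(\Space)$ amortized time and $O(\Space)$ working space, and produces chains that terminate at the endpoints of the two exit-gate spokes.

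\textbf{(ii) Choosing the exit spoke.} A single call to \lemref{c:c:w}, using the last edge of one of the two concave chains, determines in $O(n)$ time whether $\sopt$ leaves through the ``left'' or the ``right'' exit spoke. \textbf{(iii) Advancing the apex.} With the exit spoke fixed, the new source $\source_{i+1}$ is the farthest point along the committed chain through which every extension of $\sopt$ beyond $\Corridor_i$ must pass; I locate it by binary search on the $O(\Space)$ vertices of that chain, where each step is a single call to \lemref{c:c:w} asking whether $\sopt$ lies to the left or to the right of a candidate funnel edge. This costs $O(\log \Space)$ tests of $O(n)$ time each, totalling $O(n \log \Space)$. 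The sub-path of $\sopt$ between $\source_i$ and $\source_{i+1}$ is then streamed to the output (no storage cost) and the funnel is truncated to begin at $\source_{i+1}$, in $O(\Space)$ additional time; the new triangle $\triangle_{i+1}$ is formed from $\source_{i+1}$ together with the chosen exit spoke. \textbf{(iv) Next corridor.} I pick an arbitrary query point $\pntC$ just past the chosen exit spoke inside $\Corridor_{i+1}$ and invoke \lemref{compute:corridor} on $\pntC$; this returns $\Corridor_{i+1}$ in expected time $O(n + \Space \log \Space \log^4 \nA)$ using $O(\Space)$ space.

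Summing the four phases yields the claimed expected bound $O(n \log \Space + \Space \log \Space \log^4 \nA)$ and $O(\Space)$ working space. The main obstacle is phase (iii): without any global data structure on $\Polygon$, global questions about $\sopt$ cannot be answered locally, so each comparison of the binary search that pins down the new apex must scan the entire polygon via \lemref{c:c:w}; fortunately this happens only $O(\log \Space)$ times per corridor, yielding the $O(n \log \Space)$ term that dominates the combinatorial part of the cost.
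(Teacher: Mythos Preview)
Your argument follows the same scheme as the paper: process the corridor locally in $O(\Space)$ time, select the exit spoke via a call to \lemref{c:c:w}, shrink the funnel with $O(\log\Space)$ further calls to \lemref{c:c:w}, and locate the next corridor with \lemref{compute:corridor}. Two small differences are worth noting: the paper glues $\triangle_i$ to $\Corridor_i$, clips, and computes the three shortest paths from $\source_i$ inside that $O(\Space)$-size polygon directly (triangulate, then \cite{gh-ospqs-89}) rather than by incremental Lee--Preparata propagation; and its phase~(iii) is phrased as \emph{funnel reduction}---repeatedly extend a middle chain edge to the base, keep the sub-funnel containing $\sopt$---so the final $\triangle_{i+1}$ \emph{is} the reduced funnel (its base a sub-segment of the spoke), whereas your literal ``$\source_{i+1}$ together with the chosen exit spoke'' need not lie inside $\Polygon$ when the funnel chains are nontrivial.
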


\begin{proof}
    We use the above notation.  The algorithm glues together
    $\triangle_i$ to $\Corridor_i$ to get a new polygon. Next, it
    clips the new polygon by extending the two edges of $\triangle_i$
    from $\source_i$. Let $\Corridor_i'$ denote the resulting polygon,
    depicted in \figref{clip:clap} (ii).  Let the three vertices of
    $\Corridor_i'$ forming the two ``exit'' spokes be
    $\pnt_i,\pntA_i, \pntB_i$.  Next, the algorithm computes the
    shortest path from $\source_i$ to the three vertices
    $\pnt_i, \pntA_i, \pntB_i$ inside $\Corridor_i'$, and let
    $\curveA_i, \curveA_i', \curveA_i''$ be these paths, respectively
    (this takes $O( \cardin{\Corridor_i'}) = O( \Space)$ time
    \cite{gh-ospqs-89} after linear time triangulation of the polygon
    \cite{c-tsplt-91a, agr-ratsp-01}). Using \lemref{c:c:w} the
    algorithm decides if $\curveA_i \prec \sopt \prec \curveA_i'$ or
    $\curveA_i' \prec \sopt \prec \curveA_i''$.  We refer to a prefix
    path (that is part of the desired shortest path) followed by the
    two concave chains as a \emphi{funnel} -- see \figref{clip:clap}
    (iii) and \figref{funnel:reduction} for an example.

        Assume that $\curveA_i \prec \sopt \prec \curveA_i'$, and let
        $F_i$ be the funnel created by these two shortest paths, where
        $\pnt_i\pntA_i$ is the base of the funnel. If the space
        bounded by the funnel is a triangle, then the algorithm sets
        its top vertex as $\source_{i+1}$, the funnel triangle is
        $\triangle_{i+1}$, and the algorithm computes the corridor on
        the other side of $\pnt_i \pntA_{i}$ using the algorithm of
        \lemref{compute:corridor} (by picking any point on
        $\pnt_i \pntA_i$ as the query for the point location
        operation), set it as $\Corridor_{i+1}$, and continues the
        execution of the algorithm to the next iteration.

        \SaveIndent
    \noindent%
    \begin{minipage}{0.78\linewidth}%
        \RestoreIndent%
        The challenge is handling funnel chains that are
        ``complicated'' concave polygons (with at most $O(\Space)$
        vertices), see \figref{funnel:reduction}. As long as the
        funnel $F_i$ is not a triangle, pick a middle edge $\edge$ on
        one side of the funnel, and extend it till it hits the edge
        $\pnt_i \pnt_{i+1}$, at a point $\pntC$. This breaks $F_i$
        into two funnels, and using the algorithm of \lemref{c:c:w} on
        the edge $\edge$, decide which of these two funnels contains
        the shortest path $\sopt$, and replace $F_i$ by this
        funnel. Repeat this process till $F_i$ becomes a
        triangle. Once this happens, the algorithm continues to the
        next iteration as described above. Clearly, this funnel
        ``reduction'' requires $O( \log \Space )$ calls to the
        algorithm of \lemref{c:c:w}.

        Note, that the algorithm ``forgets'' the portion of the funnel
        that is common to both paths as it moves from $\Corridor_i$ to
        $\Corridor_{i+1}$. This polygonal path is a part of the
        shortest path $\sopt$ computed by the algorithm, and it can be
        output at this stage, before moving to the next corridor
        $\Corridor_{i+1}$.
    \end{minipage}
    \hfill
    \begin{minipage}{0.18\linewidth}%
        \includegraphics[width=0.95\linewidth]{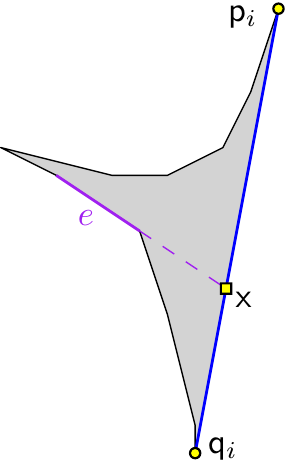}%
        \captionof{figure}{\\
           Funnel reduction.}  \figlab{funnel:reduction}
    \end{minipage}
    \smallskip%

    In the end of the iteration, this algorithm computes the next
    corridor $\Corridor_{i+1}$ by calling the algorithm of
    \lemref{compute:corridor}.
\end{proof}


\subsection{The algorithm}

The overall algorithm works by first computing the corridor
$\Corridor_1$ containing the source $\source_1=\source$ using
\lemref{compute:corridor}. The algorithm now iteratively applies
\lemref{step} till arriving to the corridor containing $\target$,
where the remaining shortest path can be readily computed. Since every
corridor gets visited only once by this walk, we get the following
result.

\begin{theorem}
    Given a simple polygon $\Polygon$ with $n$ vertices (stored in a
    read only memory), a start vertex $\source$, a target vertex
    $\target$, and a space parameter $\Space$, one can compute the
    length of the shortest path from $\source$ to $\target$ (and
    output it), using $O(\Space)$ additional space, in
    $O \pth{ {n^2}/{\Space}}$ expected time, if
    $\Space = O(n /\log^2 n)$. Otherwise, the expected running time is
    $O\pth{ {n^2}/{\Space} + n \log \Space \log^4 (n/\Space) }$.
\end{theorem}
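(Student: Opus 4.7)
The plan is to assemble the machinery already in hand. Initialize $\source_1 \leftarrow \source$ and the trivial funnel triangle $\triangle_1$ at $\source$, whose two sides are the two rays, located in $O(n)$ time by vertical and horizontal ray shots, that pick out the locally correct quadrant around $\source$; then compute the corridor $\Corridor_1 \in \CD{\CurveSet}$ containing $\source$ by a call to \lemref{compute:corridor}. The main loop repeats the following. First, test whether $\target$ lies inside the clipped region $\Corridor_i \cup \triangle_i$ by a constant-space $O(n)$ variant of \AlgIsInSubPoly. If yes, compute the geodesic from $\source_i$ to $\target$ in this $O(\Space)$-vertex simple polygon directly (linear-time triangulation followed by the linear-time shortest-path-in-a-simple-polygon subroutine already invoked inside \lemref{step}), emit it, and halt. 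Otherwise, invoke \lemref{step}: it traces $\sopt$ across $\Corridor_i$, emits the committed common prefix of the resulting funnel, and returns the updated triple $(\source_{i+1}, \triangle_{i+1}, \Corridor_{i+1})$, the last of which is produced by a further call to \lemref{compute:corridor} at a query point just beyond the chosen exit gate.

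Correctness follows from three observations. First, the corridor decomposition $\CD{\CurveSet}$ partitions $\Polygon$ into interior-disjoint canonical pieces, so the simple geodesic $\sopt$ crosses each corridor at most once and is determined by its sequence of crossed gates. Second, the preconditions of \lemref{step} --- that $\source_i \in \sopt$, that $\sopt$ enters $\Corridor_i$ through the base of $\triangle_i$, and that the correct exit gate is selected --- are preserved inductively by \lemref{step} itself. Third, the iteration halts exactly at the corridor containing $\target$, detected by the in-loop test. Since at most $\nA = \floor{(n-1)/\Space}+1 = O(n/\Space)$ corridors are visited, and each iteration costs expected $O(n \log \Space + \Space \log \Space \log^4 \nA)$ by \lemref{step}, summing gives expected total running time
\[
O\pth{\nA \cdot n \log \Space + \nA \cdot \Space \log \Space \log^4 \nA} = O\pth{\frac{n^2}{\Space} + n \log \Space \log^4(n/\Space)},
\]
which is the second bound. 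In the regime $\Space = O(n/\log^2 n)$ one has $n/\Space = \Omega(\log^2 n)$, so $n^2/\Space = \Omega(n \log^2 n)$, and the additive term $n \log \Space \log^4(n/\Space)$ is dominated by $n^2/\Space$ throughout (at the boundary $\Space \approx n/\log^2 n$, this term is $O(n \log n \cdot (\log \log n)^4)$, well below $n \log^2 n$), yielding the first bound $O(n^2/\Space)$.

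The space bound is immediate. At any moment the algorithm stores only $\source_i$, a constant-size implicit description of $\Corridor_i$ (its four defining curves, each recorded by $O(1)$ indices into the read-only polygon, per \defref{breakup}), the triangle $\triangle_i$, and the $O(\Space)$ scratch used transiently by \lemref{compute:corridor} and by the in-corridor shortest-path and funnel reductions inside \lemref{step}. The main subtlety --- already handled inside \lemref{step} --- is that the funnel must be pruned to a single triangle before advancing, by streaming its committed common prefix to the output; without this, the carried funnel could grow to length $\Omega(n)$ and defeat the sublinear-space goal, whereas with it the persistent between-corridors state is $O(1)$, and the $O(\Space)$ working-memory bound holds throughout the run.
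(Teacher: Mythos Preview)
Your argument has a genuine arithmetic slip that hides a missing idea. You sum
\[
O\pth{\nA \cdot n \log \Space + \nA \cdot \Space \log \Space \log^4 \nA}
\]
and claim this equals $O\pth{n^2/\Space + n \log \Space \log^4(n/\Space)}$. But $\nA \cdot n \log \Space = \Theta\bigl((n^2/\Space)\log\Space\bigr)$, not $O(n^2/\Space)$; you have silently dropped a $\log\Space$ factor from the dominant term. The paper's proof computes exactly this naive bound $O\bigl((n^2/\Space)\log\Space + n\log\Space\log^4\nA\bigr)$ and then explains how to remove the extra $\log\Space$.

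The fix is the opposite of what you emphasize in your last paragraph. The $n\log\Space$ cost per corridor in \lemref{step} comes from the $O(\log\Space)$ calls to \lemref{c:c:w} needed to reduce the funnel all the way down to a single triangle. The paper's optimization is to \emph{not} do this: instead, allow the funnel to carry $O(\Space)$ edges between corridors, and only trigger a reduction (down to, say, $\Space/4$ edges) when the funnel exceeds $\Space/2$ edges. Amortized, this costs $O(1)$ calls to \lemref{c:c:w} per corridor, so the per-corridor cost drops to $O(n + \Space\log\Space\log^4\nA)$, and the stated bound follows. This still uses only $O(\Space)$ persistent space between corridors, and one must adapt \lemref{step} so that a funnel, rather than a triangle, enters each new corridor. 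Your insistence that ``the funnel must be pruned to a single triangle before advancing'' is precisely what blocks the claimed running time.
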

\begin{proof}
    The algorithm is described above, and let
    $\nA = \floor{ (n -1) / \Space}+1$.  There are $O(\nA)$ corridors,
    and this bounds the number of iterations of the algorithm. As
    such, the overall expected running time is
    \begin{math}
        O\pth{ \nA \pth{n \log \Space + \Space \log \Space \log^4 \nA
           } }%
        =%
        O\pth{ ({n^2}/{\Space}) \log \Space + n \log \Space \log^4 \nA
        }.
    \end{math}

    To get a better running time, observe that the extra log factor
    (on the first term), is rising out of the funnel reduction
    $O( \log \Space)$ queries inside each corridor, done in the
    algorithm of \lemref{step}. If instead of reducing a funnel all
    the way to constant size, we reduce it to have say, at most
    $\ceil{\Space/4}$ edges (triggered by the event that the funnel
    has at least $\Space/2$ edges), then at each invocation of
    \lemref{step}, only a constant number of such queries would be
    performed. One has to adapt the algorithm such that instead of a
    triangle entering a new corridor, it is a funnel. The adaptation
    is straightforward, and we omit the easy details. The improved
    running time is
    \begin{math}
        O\pth{ {n^2}/{\Space} + n \log \Space \log^4 \nA }.%
    \end{math}
\end{proof}


\section{Epilogue}

In this paper, we showed the following: \smallskip%
\smallskip%
\begin{compactenum}[(A)]
    \item Violator spaces are the natural way to phrase the problem of
    one-shot point location query in a canonical decomposition.

    \item A variant of Seidel's algorithm solves Violator spaces in
    expected linear time.

    \item This algorithm can be modified to use only $O(1)$ work
    space, by using pseudo-random sequences.

    \item Backward analysis works if one uses pseudo-random sequences
    instead of true random permutations (\lemref{backward:a:works}).

    \item Last, but not least, shortest path in a polygon can be
    computed in $O(n^2 / \Space)$ expected time, using $O(\Space)$
    work space. This is done by breaking a polygon into $O(n/\Space)$
    subpolygons of size $O(\Space)$, and walking through these
    corridors, spending $O(n)$ time in each corridor.
\end{compactenum}
\smallskip%
While deriving some of the above results required quite a bit of
technical mess\footnote{The author would like to use this historical
   opportunity to apologize both to the reader and to himself for this
   mess.} %
(specifically (C) and (D) above), conceptually the basic ideas
themselves are simple and natural.

\medskip

The most interesting open problem remaining from our work, is whether
one can improve the running time for computing the shortest path in a
polygon with $O(\Space)$ space to be faster than $O(n^2/\Space)$.


\subsection*{Acknowledgments}

The author became aware of the low-space shortest path problem during
Tetsuo Asano talk in the Workshop in honor of his 65\th birthday
during SoCG 2014. The author thanks him for the talk, and the
subsequent discussions. The author also thanks Pankaj Agarwal, Chandra
Chekuri, Jeff Erickson and Bernd \si{G\"a\si{rtner}} for useful
discussions. The author also thanks the anonymous referees, for both
the conference and journal versions, for their detailed comments.


\InJoCGVer{%
   \BibTexMode{
      \bibliographystyle{plain}%
      \bibliography{sublinear_space}%
   }%
   \BibLatexMode{
      \printbibliography
   }
}%
\InNotJoCGVer{%
   \BibTexMode{%
      \bibliographystyle{salpha}%
      \bibliography{sublinear_space}%
   }
   \BibLatexMode{
      \printbibliography
   }
}


\appendix

\section{Backward analysis for pseudo-random %
   sequences}
\apndlab{backward}

We prove the probabilistic bounds we need from scratch. We emphasize
that this is done so that the presentation is self contained. Our
estimates and arguments are inspired by Mulmuley's \cite[Chapter
10]{m-cgitr-94}, although we are dealing with somewhat different
events. In particular, there is nothing new in \apndref{s:p:s}, and
relatively little new in \apndref{p:s:b:c}.  Finally, \apndref{b:b:a}
proves the new bounds we need.

Interestingly, Indyk \cite{i-samwi-01} used arguments in a similar
spirit to bound a different event, related to the probability of the
$i$\th element in a pseudo-random ``permutation'' to be a minimum of
all the elements seen so far. This does not seem to have a direct
connection to the analysis here.

\subsection{Some probability stuff}
\apndlab{s:p:s}

\newcommand{\dInd}{\nu}

We need the following lemma \cite[Theorem A.2.1]{m-cgitr-94} and
provide a somewhat more precise bounds on the constants involved, than
the reference mentioned.

\bigskip%
\noindent\textbf{Restatement of \lemref{chebychev:k}.~}%
{\em \LemmaCehbychevKBody}

\begin{proof}
    For $i > 1$, and any $j$, let
    \begin{align*}
      \alpha_i%
      &=%
        \Ex{ \Bigl. (X_j - p)^i}%
        =%
        p (1-p)^i + (1-p) (-p)^i%
        \leq %
        p (1-p)\pth{ \Bigl. (1-p)^{i-1} + p^{i-1}}%
        \InJoCGVer{%
      \\&}%
        \leq%
        p (1-p)%
          \leq%
          p.
    \end{align*}
    Similarly, we have that
    $\alpha_1 = \Ex{ X_j - p } = \Ex{X_j} - p = p - p = 0$.

    \smallskip%
    Consider the variable
    $Z = \pth{\bigl. \sum_{j=1}^n (X_j -p)}^{2\pDimB}$, and its
    expectation.  Let $\ISet$ be the set of all $2\pDimB$ tuples
    $\pth{i_1, \ldots, i_{2\pDimB}} \in \IntRange{n}^{2\pDimB}$, such
    that $i_1 \leq i_2 \leq \ldots \leq i_{2\pDimB}$. We have by
    linearity of expectations that
    \begin{math}
        \Ex{\bigl.Z\ts} = \sum_{\pth{i_1, \ldots, i_{2\pDimB}} \in
           \ISet}
        \Ex{\prod_{j=1}^{2\pDimB} \pth{X_{i_j} - p}}.
    \end{math}
    Now, any tuple with $(i_1,\ldots, i_{2\pDimB})$ such that an
    index, say $i_\ell$, appears exactly once, contributes $0$ to this
    summation, since then
    \begin{align*}
      \Ex{\prod\nolimits_{j=1}^{2\pDimB} \pth{X_{i_j} - p}}%
      = %
      \Ex{\Bigl. X_{i_\ell} - p } \Ex{\prod\nolimits_{j\in
      \IntRange{2\pDimB}\setminus \brc{\ell}} \pth{X_{i_j} - p}} =
      0,
    \end{align*}
    as $\alpha_1 = 0$, and as the variables are $2\pDimB$-wise
    independent. So, consider a tuple $I \in \ISet$, where every index
    appears at least twice, and there are $\dInd \leq \pDimB$ distinct
    indices.  In particular, if such a term involves variables
    $i_1,\ldots, i_\dInd$, with multiplicities $n_1, \ldots, n_\dInd$
    (all at least $2$), then
    \begin{align*}
        \Ex{\Bigl.{\prod\nolimits_{j=1}^\dInd \pth{X_{i_j} -
                 p}^{n_j}}}%
        = %
        \prod\nolimits_{j=1}^\dInd \Ex{\Bigl.\pth{X_{i_j} - p}^{n_j}}%
        \leq \prod\nolimits_{j=1}^\dInd \alpha_{n_j}%
        \leq%
        p^\dInd,
    \end{align*}
    using the $2\pDimB$-wise independence.

    We want to bound the total sum of the coefficients of all such
    monomials in the polynomial $Z$ that have at most $\dInd$ distinct
    variables.  To this end, to specify how we generated such a
    monomial, we need to specify $2\pDimB$ integers
    $i_1, \ldots, i_{2\pDimB}$.  We first scan such a sequence and
    write down the $\dInd$ unique values encountered, in the order
    they are being encountered. There are $\leq n^\dInd$ such ``seen
    first'' sequences. Now, for every index $i_t$, we need to specify
    if it is new or not. There are $\binom{2\pDimB}{\dInd}$ such
    choices. Finally, for each number in this sequence which is a
    repetition, we need to specify which of the previously (at most
    $\dInd$ numbers) seen before it is. There are
    $\dInd^{2\pDimB-\dInd}$ such possibilities. As such, the total sum
    of all such coefficients in $Z$ is bounded by
    $\binom{2\pDimB}{\nu}n^{\dInd}\dInd^{2\pDimB-\dInd}$. Every such
    monomial contributes $p^\dInd$ to $\Ex{Z}$. As such, we have
    \begin{align*}
      \Ex{\Bigl.Z}%
      &\leq%
        0 + \sum_{\dInd=1}^{\pDimB}
        \binom{2\pDimB}{\nu}n^{\dInd}\dInd^{2\pDimB-\dInd} p^{\dInd}
        \leq%
        \pth{pn}^\pDimB \sum_{\dInd=1}^{\pDimB} \binom{2\pDimB}{\nu}\pDimB^{2\pDimB-\dInd}
        \leq%
        \frac{2^{2\pDimB}}{\sqrt{2\pDimB}}%
        \pDimB^{2\pDimB} \pth{pn}^\pDimB%
        =%
        \frac{ \pth{4\pDimB^2 pn}^\pDimB}{\sqrt{2\pDimB}},
    \end{align*}
    since $p n \geq 1$, and
    \begin{math}
        \binom{2\pDimB}{\dInd} \leq \binom{2\pDimB}{\pDimB} \leq
        2^\pDimB/\sqrt{2\pDimB}
    \end{math}
    (see \cite[Section 2.6]{mn-idm-98}).

    By Markov's inequality, we have that
    \begin{align*}
        \ds%
        \Prob{\Bigl. \cardin{Y - \mu} \geq \mu}%
        =%
        \Prob{ \pth{Y - \mu}^{2\pDimB} \geq \mu^{2\pDimB}}%
        \leq%
        \frac{\Ex{ \pth{Y - \mu}^{2\pDimB}} } {\mu^{2\pDimB}}%
        =%
        \frac{\bigl.\Ex{\bigl.Z }}{\mu^{2\pDimB}}%
        \leq%
        \frac{ \pth{4\pDimB^2}^\pDimB \mu^\pDimB}{ \sqrt{2\pDimB} \mu^{2\pDimB}},
    \end{align*}
    implying the claim.
\end{proof}




\subsection{The probability of a specific basis to %
   have a conflict at iteration $i$}
\apndlab{p:s:b:c}

The following lemma bounds the probability that a specific basis
$\Basis$ is defined in the prefix of the $i-1$ sampled elements, and
the first element violating it is in position $i$. This probability is
of course affected by the size of its conflict list $\CLList$.

\begin{lemma}
    \lemlab{specific:basis}%
    Let $X_1, \ldots X_i$ be a sequence of variables that are
    uniformly distributed over $\IntRange{n}$, that are $\pDim$-wise
    independent, where $\pDim > \cDim$ is a sufficiently large
    constant. Let $\Basis, \CLList \subseteq \IntRange{n}$ be two
    disjoint sets of size $\cDim$ and $k = t(n/i)$, respectively,
    where $\cDim$ is a constant, $i >2\cDim$, and $t>0$ is
    arbitrary. Let $\Event$ be the following event:

    \smallskip
    \begin{inparaenum}[\quad(I)]
        \item $\Basis \subseteq \brc{X_1,\ldots, X_{i-1}}$.
        \item $\CLList \cap  \brc{X_1,\ldots, X_{i-1}} = \emptyset$.
        \item $X_i \in \CLList$.
    \end{inparaenum}
    
    \smallskip%
    Then, for $t > 1$, we have
    $ \Prob{\bigl. \Event} = O\pth{ \frac{1}{i}
       \pth{\frac{i}{n}}^\cDim {\frac{1}{t^{\pDimB}}} }$,
    where $\pDimB = (\pDim-\cDim-1)/2 -1$.

    For $t \leq 1$, we have
    $O\pth{ \frac{1}{i} \pth{\frac{i}{n}}^\cDim }$, as long as
    $\pDim \geq \cDim$.
\end{lemma}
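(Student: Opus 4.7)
The plan is to reduce the event $\Event$ to an explicit union over ``certificates'': an injection $\pi : \Basis \to \IntRange{i-1}$ specifying a position in the prefix for each basis element, together with an element $c \in \CLList$ realizing $X_i$. Writing $F_{\pi, c}$ for the fixing event ``$X_i = c$ and $X_{\pi(b)} = b$ for every $b \in \Basis$'' and $G_{\pi, c}$ for ``$X_j \notin \CLList$ for every $j \in \IntRange{i-1}\setminus \pi(\Basis)$'', every outcome in $\Event$ lies in some $F_{\pi, c} \cap G_{\pi, c}$ (use the leftmost positions of each $b \in \Basis$ and the actual value of $X_i$), and a union bound gives
\[
    \Prob{\Event}
    \leq
    \sum_{\pi}\sum_{c \in \CLList}
    \Prob{F_{\pi, c}} \cdot \Prob{G_{\pi, c} \sep F_{\pi, c}}.
\]
The number of pairs is at most $(i-1)^\cDim \cdot k$, since there are at most $(i-1)^\cDim$ injections from a $\cDim$-set into $\IntRange{i-1}$ and $\cardin{\CLList} = k$.

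Since $\Basis \cap \CLList = \emptyset$, the $\cDim + 1$ values listed in $F_{\pi, c}$ sit at distinct positions with distinct prescribed values, so $(\cDim + 1)$-wise independence (available since $\pDim > \cDim$) yields $\Prob{F_{\pi, c}} = (1/n)^{\cDim + 1}$. The core of the argument is estimating the conditional factor $\Prob{G_{\pi, c} \sep F_{\pi, c}}$. The key observation, immediate from writing out the conditional probability, is that conditioning a $\pDim$-wise independent uniform family on $r$ of its entries leaves the remaining entries $(\pDim - r)$-wise independent and uniform on $\IntRange{n}$. Applied with $r = \cDim + 1$, the $i - 1 - \cDim$ unconditioned prefix variables form a $(\pDim - \cDim - 1)$-wise independent uniform family. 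I would then invoke \lemref{chebychev:k} on $N = \sum_{j \in \IntRange{i-1} \setminus \pi(\Basis)} \mathbb{1}\pbrcx{X_j \in \CLList}$, whose indicators are $2\pDimB'$-wise independent for $\pDimB' = (\pDim - \cDim - 1)/2$, with success probability $p = k/n = t/i$. The hypothesis $i > 2\cDim$ forces $i - 1 - \cDim \geq i/3$, hence $\mu = \Ex{N \sep F_{\pi, c}} \geq t/3$, and \lemref{chebychev:k} yields
\[
    \Prob{G_{\pi, c} \sep F_{\pi, c}}
    =
    \Prob{N = 0 \sep F_{\pi, c}}
    =
    O\pth{ 1/t^{\pDimB'}}
\]
for all $t \geq 1$ (the bound is vacuous but still valid as $O(1/t^{\pDimB'})$ in the constant regime where $t$ is close to $1$).

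Combining everything for $t > 1$,
\[
    \Prob{\Event}
    \leq
    (i-1)^\cDim \cdot k \cdot \frac{1}{n^{\cDim+1}} \cdot O\pth{\frac{1}{t^{\pDimB'}}}
    =
    O\pth{\frac{t}{i} \pth{\frac{i}{n}}^\cDim \cdot \frac{1}{t^{\pDimB'}}}
    =
    O\pth{\frac{1}{i}\pth{\frac{i}{n}}^\cDim \cdot \frac{1}{t^{\pDimB' - 1}}},
\]
which is the stated bound with $\pDimB = \pDimB' - 1 = (\pDim - \cDim - 1)/2 - 1$. For $t \leq 1$, I would use only the trivial bound $\Prob{G_{\pi, c} \sep F_{\pi, c}} \leq 1$, producing $\Prob{\Event} \leq (t/i)(i/n)^\cDim \leq (1/i)(i/n)^\cDim$ without needing the concentration step, and requiring only $(\cDim + 1)$-wise independence.

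\textbf{Main obstacle.} The delicate accounting is that the factor of $t$ arising from $\cardin{\CLList} = tn/i$ in the certificate sum eats exactly one unit of decay from the concentration estimate, which is why the final exponent is $(\pDim - \cDim - 1)/2 - 1$ rather than $(\pDim - \cDim - 1)/2$. The hypothesis $i > 2\cDim$ is precisely what is needed to guarantee that the $\cDim + 1$ ``frozen'' positions do not drag the conditional mean below a constant multiple of $t$, and the hypothesis $\pDim > 6\cDim + 9$ quoted in \lemref{backward:a:works} is what makes $\pDimB' - 1$ positive and sufficiently large for downstream use.
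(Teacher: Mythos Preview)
Your proposal is correct and follows essentially the same approach as the paper: both union-bound over assignments of positions in $\IntRange{i-1}$ to the $\cDim$ basis elements together with the event $X_i \in \CLList$, compute the probability of the fixing event via $(\cDim+1)$-wise independence, and then apply \lemref{chebychev:k} to the remaining $(\pDim-\cDim-1)$-wise independent prefix variables to bound the conditional probability that none hit $\CLList$. Your bookkeeping is in fact slightly tidier than the paper's (you sum over $c\in\CLList$ explicitly and avoid an extraneous $\cDim!$ factor), but the argument is the same.
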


\begin{proof}
    Fix the numbering of the elements of $\Basis$ as
    $\Basis = \brc{b_1, \ldots, b_\cDim}$, and let $\ISet$ be the set
    of tuples of $\cDim$ distinct indices in
    $\IntRange{i-1}^{\cDim}$. For such a tuple
    $I =(i_1,\ldots, i_\cDim)$, let $\Event(I)$ be the event that
    $X_{i_1} = b_1, \ldots, X_{i_\cDim}= b_\cDim , \text{ and } X_{i}
    \in \CLList$. We have that
    \begin{align*}
        \Prob{ \Bigl. \Event(I) }%
        &=%
        \Prob{\Bigl. X_i \in \CLList}%
        \prod_{j=1}^{\cDim} \Prob{\Bigl. X_{i_j} = b_j}%
        =%
        \frac{1}{n^{\cDim} } \cdot \frac{k}{n} =
        \frac{k}{n^{\cDim+1}},
    \end{align*}
    as the variables $X_1, \ldots, X_n$ are $\pDim$-wise independent,
    and $\pDim > \cDim+1$.  Let $\EventA(I)$ be that event that none
    of the variables of $\XS = \brc{X_1,\ldots, X_{i-1}}$ are in
    $\CLList$. In particular, consider the $m = i-1-\cDim$ variables
    in $\XS$ that have an index in $\IntRange{i-1} \setminus I$, and
    let $X_j'$ denote the $j$\th such variable. Let $Y_j$ be an
    indicator variable that is one if $X_j' \in \CLList$. The
    variables $X_1', \ldots, X_m'$ are (at worst)
    $(\pDim - \cDim-1)$-wise independent (since, conceptually, we
    fixed the value of the variables of $I$), and so are the indicator
    variables $Y_1,\ldots, Y_m$.  For any $j$, we have that
    $p = \Ex{Y_j} = \Prob{X_j' \in \CLList} = k/n$.  As such, for
    $\pDimB = (\pDim-\cDim-1)/2$, we have that
    \begin{align*}
      \Prob{\Bigl. \EventA(I) \cap \Event(I) }%
      &=%
        \Prob{\EventA(I) \sep{\Bigl. \Event(I)}}
        \Prob{\Bigl. \Event(I)}%
        =%
        \Prob{\sum\nolimits_j Y_j = 0} \frac{k}{n^{\cDim+1}}%
      \\&%
          =%
          O\pth{ \pth{\frac{1}{pm}}^{(\pDim - \cDim -1)/2}
          \frac{k}{n^{\cDim+1}}}%
          =%
          O\pth{ \pth{\frac{n}{k(i-1-\cDim)}}^{\pDimB}
          \frac{k}{n^{\cDim+1}}}\\
      &=%
        O\pth{ \pth{\frac{n}{ki}}^{\pDimB} \frac{(tn/i)}{n^{\cDim+1}}}%
        =%
        O\pth{ {\frac{1}{t^{\pDimB}}} \cdot
        \frac{t}{ i n^{\cDim}}}
        =%
        O\pth{ {\frac{1}{i t^{\pDimB-1} n^{\cDim}}}},
    \end{align*}
    by \lemref{chebychev:k}, as $i > 2\cDim$, $\pDim$ is constant, and
    $k = tn/i$.  There are at most $i^\cDim$ choices for the tuple
    $I$, and there are $\cDim!$ different orderings of the elements of
    $\Basis$. As such, the desired probability is bounded by
    \begin{math}
        \ds \Biggl.O\pth{ { i^\cDim \cDim!  {\frac{1}{i t^{\pDimB-1}
                    n^{\cDim} } } }}%
        =%
        O\biggl( \frac{1}{i} \Bigl(\frac{i}{n} \Bigr)^\cDim
        {\frac{1}{t^{\pDimB-1}}} \biggr),
    \end{math}
    as $\cDim$ is a constant.

    The bound for $t\leq 1$, follows by using $\Event(I)$ instead of
    $\EventA(I) \cap \Event(I)$ in the above analysis.
\end{proof}

\subsection{Back to backward analysis}
\apndlab{b:b:a}

Consider a fixed violator space $\VSpace = (\CSet, \vFuncC)$ with
$n =\cardin{\CSet}$ constraints, and combinatorial dimension
$\cDim$. Let $\BasesAll$ be the set of bases of $\VSpace$.  Let
$\XS = X_1, \ldots, X_n$ be the sequence of constraints generated by
the pseudo-random generator, that is $\pDim$-wise independent (say,
somewhat arbitrarily, $\pDim = 6\cDim + 10$).  In the following, the
\emph{$i$\th prefix} of $\XS$ is $\XS_i = \permut{ X_1,\ldots, X_i}$.

\begin{lemma}
    \lemlab{count:bases}%
    Let $\BasisLE{k}$ be the set of all the bases of $\VSpace$ that
    have a conflict list of size at most $k$. Then, we have that
    $\cardin{\BasisLE{k}} =O(k^\cDim)$.
\end{lemma}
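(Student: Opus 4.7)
The plan is to apply the standard Clarkson--Shor random-sampling argument in the violator space framework. I would pick a uniform random subset $\RSample \subseteq \CSet$ of size $r = \ceil{n/k}$ (without replacement) and use that, under the general position assumption the paper already adopts, every such $\RSample$ has a unique basis $\basisX{\RSample}$.

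The central step is to lower bound, for every $\Basis \in \BasisLE{k}$, the probability that $\Basis = \basisX{\RSample}$. I would first check the following ``appearance'' property: if $\Basis \subseteq \RSample$ and $\RSample \cap \vFuncX{\Basis} = \emptyset$, then $\basisX{\RSample} = \Basis$. Indeed, locality (\defref{violator:space}) yields $\vFuncX{\RSample} = \vFuncX{\Basis}$, and since $\Basis \subseteq \RSample$ is a minimal set realizing this conflict list, it must coincide with $\basisX{\RSample}$. Consequently,
\[
\Prob{\Basis = \basisX{\RSample}}
\;\geq\; \frac{\binom{n - \cardin{\Basis} - \cardin{\vFuncX{\Basis}}}{r - \cardin{\Basis}}}{\binom{n}{r}}.
\]
With $\cardin{\Basis} \leq \cDim$ and $\cardin{\vFuncX{\Basis}} \leq k$, standard binomial estimates give a lower bound of $\Omega(1/k^\cDim)$: the factor from $\Basis \subseteq \RSample$ contributes $\Omega((r/n)^\cDim)$, and the factor for avoiding $\vFuncX{\Basis}$ contributes $\Omega(1)$, since the expected number of elements of $\vFuncX{\Basis}$ falling in $\RSample$ is at most $rk/n = O(1)$.

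Summing over $\Basis \in \BasisLE{k}$, and using that for every $\RSample$ there is exactly one basis, one obtains
\[
\cardin{\BasisLE{k}} \cdot \Omega\pth{1/k^\cDim}
\;\leq\; \sum_{\Basis \in \BasisLE{k}} \Prob{\Basis = \basisX{\RSample}}
\;\leq\; \Ex{\cardin{\brc{\basisX{\RSample}}}} = 1,
\]
which gives $\cardin{\BasisLE{k}} = O(k^\cDim)$. The regime where $k$ is so large that $r < \cDim$ (in particular $k \geq n$) is handled separately by the trivial bound: $\cardin{\BasisLE{k}}$ is bounded by the total number of bases, which is at most $\binom{n}{\cDim} = O(n^\cDim) = O(k^\cDim)$.

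The argument is essentially routine, and the main delicate point is verifying that the lower bound $\Omega(1/k^\cDim)$ holds uniformly with a constant depending only on $\cDim$, even as $\cardin{\vFuncX{\Basis}}$ varies in $[0,k]$ and $\cardin{\Basis}$ in $[0,\cDim]$. This is handled by the standard approximation $\binom{n-a}{r-b}/\binom{n}{r} = (r/n)^b (1 - r/n)^{a-b} (1 \pm o(1))$, which is valid in the relevant regime after merging the small-$r$ corner into the trivial case above.
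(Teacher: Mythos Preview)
Your proof is correct and follows essentially the same Clarkson--Shor sampling argument as the paper; the only cosmetic difference is that you sample a fixed-size subset of size $\ceil{n/k}$ while the paper uses independent Bernoulli sampling with probability $1/k$, and your write-up spells out the appearance property and edge cases that the paper leaves implicit.
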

\begin{proof}
    Follows readily from the Clarkson-Shor \cite{c-arscg-88,
       cs-arscg-89} technique. Pick every constraint into $\RSample$
    with probability $1/k$ into a sample $\RSample$. The probability
    of a basis with conflict list of size at most $k$ to survive this
    purge and be the basis of $\RSample$ is $\Omega(1/k^\cDim)$. Since
    $\basisX{\RSample}$ is a single basis, the result readily follows.
\end{proof}


\begin{lemma}
    \lemlab{p:light}%
    Assume that $\pDim \geq \cDim$.  For $i > 2\cDim$, let
    $\probEL{i}$ be the probability that $X_i$ violates
    $\Basis = \basisX{X_1, \ldots, X_{i-1}} \in \BasisLE{k}$, where
    $k= n/i$. Then $\probEL{i} = O( 1/i)$.
\end{lemma}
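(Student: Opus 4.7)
The plan is to bound $\probEL{i}$ by a union bound over all possible candidate bases, restricted to those in $\BasisLE{n/i}$, and then apply the per-basis bound given by \lemref{specific:basis} together with the counting bound of \lemref{count:bases}.

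First, I would rewrite the event defining $\probEL{i}$ as a disjoint union over $\Basis \in \BasisLE{n/i}$ of the event $\Event_\Basis$ that (i) $\Basis \subseteq \brc{X_1,\ldots,X_{i-1}}$, (ii) $\vFuncX{\Basis} \cap \brc{X_1,\ldots,X_{i-1}} = \emptyset$, and (iii) $X_i \in \vFuncX{\Basis}$. By consistency and locality of the violator space (see \defref{violator:space}), conditions (i) and (ii) together force $\basisX{X_1,\ldots,X_{i-1}} = \Basis$, while condition (iii) is exactly the requirement that $X_i$ violates this basis. Thus
\begin{math}
    \probEL{i} \leq \sum_{\Basis \in \BasisLE{n/i}} \Prob{\Event_\Basis}.
\end{math}

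Next, I would fix $\Basis \in \BasisLE{n/i}$ and let $k_\Basis = \cardin{\vFuncX{\Basis}} \leq n/i$. Writing $k_\Basis = t(n/i)$ with $t = k_\Basis i/n \leq 1$, the event $\Event_\Basis$ is exactly the event $\Event$ of \lemref{specific:basis} applied with $\CLList = \vFuncX{\Basis}$. The assumption $\pDim \geq \cDim$ of the present lemma matches the hypothesis of the $t \leq 1$ branch of \lemref{specific:basis}, which then yields
\begin{math}
    \Prob{\Event_\Basis} = O\pth{ \tfrac{1}{i} \pth{\tfrac{i}{n}}^\cDim},
\end{math}
with the hidden constant independent of $\Basis$. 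The convenience of the $t \leq 1$ regime is exactly that this per-basis bound does not degrade with the size of the conflict list.

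Finally, I would plug in the counting bound $\cardin{\BasisLE{n/i}} = O\pth{(n/i)^\cDim}$ from \lemref{count:bases} to obtain
\begin{math}
    \probEL{i} = O\pth{ (n/i)^\cDim \cdot \tfrac{1}{i} (i/n)^\cDim } = O(1/i),
\end{math}
as claimed. The main (mild) obstacle is the conceptual step of verifying that the three bookkeeping conditions (i)--(iii) above genuinely capture the event in the statement of the lemma; this is where the violator-space axioms, rather than any arithmetic, do the real work. Once this decomposition is in place, everything else is a direct substitution into the two previously established estimates.
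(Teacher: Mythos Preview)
Your proposal is correct and follows essentially the same approach as the paper: a union bound over all bases in $\BasisLE{n/i}$, invoking the $t\le 1$ case of \lemref{specific:basis} for the per-basis probability and \lemref{count:bases} for the count, then multiplying the two to obtain $O(1/i)$. Your write-up is simply more explicit than the paper's about why conditions (i)--(iii) capture the event of interest.
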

\begin{proof}
    For a fixed basis $\Basis \in \BasisLE{k}$ the desired
    probability, for $\Basis$ being the basis, is
    \begin{math}
        \probEL{i, \Basis} = O\pth{ \frac{1}{i}
           \pth{\frac{i}{n}}^\cDim },
    \end{math}
    by \lemref{specific:basis} (for $t \leq 1$).  By
    \lemref{count:bases}, we have
    $\cardin{\BasisLE{k}} = O( (n/i)^\cDim)$. As such, the desired
    probability is bounded by
    \begin{math}
        O\pth{ ({1}/{i}) \pth{{i}/{n}}^\cDim (n/i)^\cDim} = O(1/i).
    \end{math}
\end{proof}

A basis $\Basis$ is \emphi{$t$-heavy} for $\XS_i$, if
$\ds \frac{\cardin{\clX{\Basis}}}{n/i} \in (t-1, t]$.

\begin{lemma}%
    \lemlab{p:heavy}%
    Assume that $\pDim$ is a constant and $\pDim \geq 3(\cDim+3)$.
    For $i > 2\cDim$, let $\probEH{t}{i}$ be the probability that
    $X_i$ violates
    \begin{math}
        \Basis = \basisX{X_1, \ldots, X_{i-1}} \in \BasesAll,
    \end{math}
    and $\Basis$ is $t$-heavy, for $t > 2$.  Then
    $\probEH{t}{i} = O\pth{ 1/(i t^2)}$.
\end{lemma}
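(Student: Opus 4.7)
The plan is a straightforward union bound: sum, over all candidate $t$-heavy bases, the probability supplied by \lemref{specific:basis} that a particular basis is actually selected as $\basisX{X_1,\ldots,X_{i-1}}$ and then gets violated by $X_i$; use \lemref{count:bases} to bound how many such candidates there are.

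First, observe that a $t$-heavy basis has conflict list of size in the range $((t-1)n/i,\ tn/i]$. In particular every $t$-heavy basis lies in $\BasisLE{tn/i}$, so by \lemref{count:bases} the number of $t$-heavy bases is at most $O\pth{(tn/i)^\cDim}$.

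Next, fix any such $t$-heavy basis $\Basis$ with conflict list of size $k = t'(n/i)$, where $t' \in (t-1,t]$. For this $\Basis$ to equal $\basisX{X_1,\ldots,X_{i-1}}$ and be violated by $X_i$, it is necessary that the four-way event $\Event$ of \lemref{specific:basis} occurs (with $\CLList = \clX{\Basis}$): the $\cDim$ elements of $\Basis$ lie in $\brc{X_1,\ldots,X_{i-1}}$, none of $\clX{\Basis}$ appears there, and $X_i \in \clX{\Basis}$. Since $t > 2$, we have $t' > 1$, so \lemref{specific:basis} applies and yields
\begin{math}
    \Prob{\Event} = O\pth{\frac{1}{i}\pth{\frac{i}{n}}^\cDim \frac{1}{(t')^{\pDimB}}} = O\pth{\frac{1}{i}\pth{\frac{i}{n}}^\cDim \frac{1}{t^{\pDimB}}},
\end{math}
where $\pDimB = (\pDim-\cDim-1)/2 - 1$ and where the last step uses $t' > t/2$.

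Summing over all $t$-heavy bases via the union bound gives
\begin{align*}
    \probEH{t}{i}
    \leq
    O\pth{\pth{\frac{tn}{i}}^\cDim} \cdot O\pth{\frac{1}{i}\pth{\frac{i}{n}}^\cDim \frac{1}{t^{\pDimB}}}
    =
    O\pth{\frac{1}{i\, t^{\pDimB - \cDim}}}.
\end{align*}
The final step is arithmetic: the hypothesis $\pDim \geq 3(\cDim+3)$ gives $\pDimB - \cDim = (\pDim-3\cDim-3)/2 - 1 \geq 2$, so the bound becomes $O(1/(it^2))$, as claimed.

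The main obstacle is just verifying that the chosen independence parameter $\pDim$ is large enough to push $\pDimB - \cDim$ past $2$; everything else is a two-line union bound combining \lemref{specific:basis} with \lemref{count:bases}. The only subtle point is that one must invoke \lemref{specific:basis} in its $t > 1$ regime (which is exactly why the statement restricts to $t > 2$, leaving a factor-of-two margin so that $t' > t/2 > 1$).
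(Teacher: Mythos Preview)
Your proof is correct and follows exactly the paper's approach: count the $t$-heavy bases via \lemref{count:bases}, bound the per-basis probability via \lemref{specific:basis}, and take a union bound, then check that $\pDimB-\cDim\geq 2$. One minor arithmetic slip: $\pDimB-\cDim=(\pDim-\cDim-1)/2-1-\cDim=(\pDim-3\cDim-3)/2$, not $(\pDim-3\cDim-3)/2-1$; this only strengthens your inequality, so the conclusion is unaffected.
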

\begin{proof}
    Arguing as in \lemref{p:light}, there are
    $O\bigl( \pth{{tn}/{i}}^\cDim \bigr)$ $t$-heavy bases, and by
    \lemref{specific:basis}, the probability of the desired event for
    a specific basis $\Basis'$, with
    $\cardin{\clX{\Basis'}} \geq (t-1)n/i$, is bounded by
    $O\pth{ \frac{1}{i} \pth{\frac{i}{n}}^\cDim
       {\frac{1}{t^{\pDimB}}}}$.
    As such, the desired probability is bounded by
    \begin{math}
        \ds O\pth{\pth{\frac{tn}{i}}^\cDim \frac{1}{i}
           \pth{\frac{i}{n}}^\cDim \frac{1}{t^{\pDimB}} }%
        =%
        O \pth{ \frac{1}{i t^2}},
    \end{math}
    if $\pDimB \geq \cDim + 2$, which holds for
    $\pDim \geq 3(\cDim+3)$.
\end{proof}

\noindent\textbf{Restatement of \lemref{backward:a:works}. }%
 {\em \LemmaBAnalysisBody}

\begin{proof}
    Plugging in the bounds of \lemref{p:light} and \lemref{p:heavy},
    we have that the desired probability is bounded by
    $\probEL{i} + \sum_{t=2}^\infty \probEH{t}{i} = O\pth{1/i +
       \sum_{t=2}^\infty \frac{1}{i t^2}} =O(1/i)$, as claimed.
\end{proof}


\section{point location in corridors using \LP-type %
   solver}
\apndlab{l:p:type}

For the amusement of the interested reader, we briefly describe how
point location in corridor decomposition can be solved as an \LP-type
problem. This requires only defining a ``crazy'' ordering on the
corridors that contain our query point, and the rest follows
readily. Of course, it is more elegant and natural to do this using
violator spaces.

We need to define an explicit ordering on the corridors so we can use
the \LP-type algorithm. This is somewhat tedious.  The corridor
$\Corridor$ corresponds to an edge $\edge$ of the medial axis
connecting the two active vertices $\pntB$ and $\pntC$ of the medial
axis.  Let $\DefSetChar$ be the set of four inputs curves that define
the corridor $\Corridor$ (i.e., $\DefSetChar$ is the \emph{defining
   set} of $\Corridor$).  The closest curve to the query point $\pnt$
is the \emphi{floor} of $\Corridor$. The other curve of $\DefSetChar$
that lies on the other side of the medial axis edge $\edge$, is the
\emphi{ceiling} of the corridor. Consider the closest point $\pnt'$ on
the medial axis to $\pnt$, and consider the two paths from $\pnt'$ to
$\pntB$ and $\pntC$. These two paths diverge at a point $\pntD$. The
curve in $\DefSetChar$ corresponding to the active vertex for which
the path diverges to the right (resp. left) at $\pntD$, is the
\emphi{right} (resp. \emphi{left}) curve of $\Corridor$.

We can now define an order on two corridors $w(F), w(F')$ that contain
$\pnt$. Formally, we define 
\begin{compactitem}
    \item If the floor of $w(F)$ is closer to $\pnt$ than $w(F')$,
    then we define $w(F) \succ w(F')$.

    \item Otherwise, by our general position assumption, it must be
    that the floors of $w(F)$ and $w(F')$ are the same.  If the
    ceilings of $w(F)$ and $w(F')$ are different, then consider the
    ceiling $c$ of $w(F \cup F')$. If $c$ is also the ceiling of
    $w(F)$, then $w(F) \succ w(F')$, otherwise, $c$ is the ceiling of
    $w(F')$, and $w(F') \prec w(F)$.

    \item Otherwise, $w(F)$ and $w(F')$ have the same floor and the
    same ceiling. We apply the same kind of argument as above to the
    left side of the corridor to resolve the order, and if this does
    not work, we use the right side of the corridor to resolve the
    ordering.
\end{compactitem}
\medskip%

Now, it is easy to verify that computing $w(\CurveSet)$ is an \LP-type
problem, with the combinatorial dimension being $\cDim=4$.  Thus, the
maximum corridor in this decomposition is the desired answer to the
point location query, and this now can be solved using this ordering
via an \LP-type solver.

\end{document}